\newcommand{\bibpunct}[6]{}
\newcommand{\inprod}{\mathbin{\lrcorner}}
\numberwithin{equation}{section}
\theoremstyle{plain}
\newtheorem{corollary}[subsection]{Corollary}
\newtheorem{lemma}[subsection]{Lemma}
\newtheorem{proposition}[subsection]{Proposition}
\newtheorem{theorem}[subsection]{Theorem}
\theoremstyle{definition}
\newtheorem{definition}[subsection]{Definition}
\newcommand{\M}{\mathcal{M}}
\newcommand{\CO}{\mathcal{O}}
\newcommand{\CU}{\mathcal{U}}
\DeclareMathOperator{\gh}{gh}
\DeclareMathOperator{\pa}{p}
\DeclareMathOperator{\Ber}{Ber}
\newcommand{\Z}{\mathbb{Z}}
\newcommand{\C}{\mathbb{C}}
\newcommand{\R}{\mathbb{R}}
\newcommand{\p}{\partial}
\newcommand{\half}{\tfrac{1}{2}}
\newcommand{\Om}{\Omega}
\newcommand{\om}{\omega}
\newcommand{\bull}{\bullet}
\renewcommand{\o}{\otimes}
\newcommand{\ho}{\mathbin{\widehat{\otimes}}}
\renewcommand{\[}{[\![}
\newcommand{\Id}{\text{Id}}
\newcommand{\vv}{\mathbf{v}}
\newcommand{\x}{\mathbf{x}}
\newcommand{\tintL}{{\textstyle\int_L}}
\newcommand{\PiSP}{\Pi\text{SP}}
\DeclareMathOperator{\End}{End}
\DeclareMathOperator{\GL}{GL}
\newcommand{\QQ}{\mathcal{Q}}
\renewcommand{\H}{\mathbf{H}}
\DeclareMathOperator{\Tot}{Tot}
\newcommand{\TW}{{\text{TW}}}
\renewcommand{\t}{\mathbf{t}}
\newcommand{\T}{\mathsf{Z}}
\renewcommand{\d}{\mathbf{d}}
\DeclareMathOperator{\so}{\mathbf{so}}
\newcommand{\g}{\mathbf{g}}
\newcommand{\Ss}{\mathbb{S}}
\renewcommand{\ss}{\mathsf{S}}
\newcommand{\TT}{\mathsf{T}}
\DeclareMathOperator{\Spin}{Spin}
\DeclareMathOperator{\SO}{SO}
\DeclareMathOperator{\SOSp}{SOSp}
\newcommand{\m}{\mathbf{m}}
\newcommand{\n}{\mathbf{n}}
\DeclareMathOperator{\cl}{c}
\newcommand{\tint}{{\textstyle\int}}
\title{Global gauge conditions in the Batalin--Vilkovisky formalism}
\author[E. Getzler and S. W. Pohorence]{Ezra Getzler and Sean Weinz
  Pohorence}
\address{Northwestern University, Evanston, Illinois, USA}
\keywords{}
\thanks{The research of the first author was supported in part by a
  Simons Foundation Collaboration Grant. The research of the second
  author is supported in part by Research Training Grant ``Analysis on
  manifolds'' of the National Science Foundation at Northwestern
  University. We thank Chris Hull for introducing us to this subject,
  and alerting us to Siegel's work on this problem, and Andrei
  Mikhailov for pointing out the importance of allowing isotopies of
  Lagrangians in Corollary~5.3.}
\begin{document}

\begin{abstract}
  In the Batalin--Vilkovisky formalism, gauge conditions are expressed
  as Lagrangian submanifolds in the space of fields and antifields. We
  discuss a way of patching together gauge conditions over different
  parts of the space of fields, and apply this method to extend the
  light-cone gauge for the superparticle to a conic neighbourhood of
  the forward light-cone in momentum space.
\end{abstract}

\maketitle

The Gribov ambiguity concerns the impossibility, for topological
reasons, of making a global choice of gauge in non-Abelian Yang-Mills
theories \cites{Gribov,Singer}. An analogous phenomenon occurs in
string theory, where the light-cone gauge only yields a gauge
condition on a dense open subset of the forward light-cone in momentum
space. In this article, we introduce a technique for patching together
gauge conditions over different parts of the space of fields. In the
final section, we apply this method to give an extension of the
light-cone gauge to all of the forward light-cone; the usual
light-cone gauge only makes sense over an open dense subset of the
forward light-cone.

We work in the Batalin--Vilkovisky formalism, in which gauge
conditions are expressed by means of Lagrangian submanifolds of the
superspace of fields and antifields. Our main result shows how the
gauge conditions associated to Lagrangian submanifolds may be glued
together. In mathematics, this is called descent, and is typically
peformed using the language of simplicial manifolds and of
cosimplicial algebras. This is the approach we adopt here. We call the
resulting collection of Lagrangian submanifolds together with families
of isotopies between them \textbf{flexible Lagrangian submanifolds}.

The main ingredient in our construction is a homotopy formula for the
gauge conditions associated to a family of Lagrangian submanifolds,
due to Mikhailov and Schwarz \cite{MS}. Imitating Weil's proof of the
de Rham theorem (see \cite{BottTu}), we obtain our formula for
integration of half-forms over flexible Lagrangian submanifolds.

The BRS formalism is the special case of the Batalin--Vilkovisky
formalism in which the action has a linear dependence on the
antifields. (This is what Batalin and Vilkovisky call a rank one
theory.) In this case, our results specialize to the formula of
Becchi, Giusto and Imbimbo \cite{BGI}. Costello \cite{Costello} also
considers the problem of globalization of locally defined gauge
conditions. Fields are sections of a fiber bundle over the world-line,
and our results concern sheaves over the total space of this fiber
bundle, and not over the world-line, as in \cites{BGI,Costello}
(although our gauge conditions happen to be translation-invariant in
the world-line).

We apply this technique to the problem of gauge-fixing the
superparticle in the Batalin--Vilkovisky formalism; this case is not
covered by the BRS formalism, since the action has quadratic
dependence on the antifields. (In the terminology of Batalin and
Vilkovisky, this is a rank two theory.) We construct a flexible
Lagrangian submanifold giving a gauge condition in an open conic
neighbourhood of the forward light-cone in momentum space, based on a
pair of open sets $U_+$ and $U_-$ contained in the region $\{p_0>0\}$,
and gauge conditions determined by the light-like vectors
$(1,0,\dots,0,\pm1)\in \R^{1,9}$. There are Lagrangian submanifolds
$L_\pm$ lying over $U_\pm$, and a Lagrangian isotopy between them over
the intersection
\begin{equation*}
  U_{+-} = U_+\cap U_- .
\end{equation*}

This flexible Lagrangian submanifold is neither Lorentz invariant in
the target superspace nor invariant under supersymmetry. In order to
show that our functional integral respects these symmetries, we extend
our formula to the case where the theory carries an action of a
(finite-dimensional) Lie superalgebra. As in the BRST formalism, this
formula takes values in the differential graded commutative algebra of
Lie superalgebra cochains, and exhibits the equivariance of our
construction with respect to this Lie superalgebra. Applied to the
superparticle, this shows that our gauge-fixed functional integral is
Lorentz invariant and supersymmetric.

\section*{Convention}

Throughout this paper, we use the terms submanifold and subspace when
referring to sub(super)manifolds and sub(super)spaces. Unless
otherwise mentioned, we make use of the Einstein summation convention.

\section{Odd symplectic structures}

In this section, we review several results on the supergroup of
symmetries of an odd symplectic superspace. Our references are Manin
\cite{Manin}*{Section 3.5} and Khudaverdian and Voronov \cite{KV}.

Let $V$ be a finite-dimensional superspace, with homogeneous basis
$\{e_a\}_{a\in I}$: each vector $e_a$ has a well-defined parity
$p_a=\pa(e_a)$. The dual superspace $V^*$ has the dual basis
$\{e^*_a\}_{a\in I}$, where $e_a^*$ has the same parity as $e_a$, and
\begin{equation*}
  e_a^*(e_b) = \delta_{ab} .
\end{equation*}
The sign rule implies that the basis $\{e_a^{**}\}_{a\in I}$ of the
double dual $V^{**}$ differs from the original basis
$\{e_a\}_{a\in I}$ by a sign:
\begin{equation*}
  e_a^{**} = (-1)^{p_a} e_a .
\end{equation*}

Let $W$ be a second finite-dimensional superspace, with homogeneous
basis $\{f_b\}_{b\in J}$, with parity $q_b=\pa(f_b)$. Given a morphism
$A:V\to W$, we define the matrix elements of $A$ by
\begin{equation*}
  A_a^b = f_b^*(Ae_a) .
\end{equation*}
For a general vector $\vv=v^ae_a$, the morphism $A$ acts by the
formula
\begin{equation*}
  A\vv = (A^b_av^a) f_b .
\end{equation*}

The supertranspose $A^*:W^*\to V^*$ of a morphism $A:V\to W$ is
defined by
\begin{equation*}
  (A^*f^*_b)(e_a) = (-1)^{\pa(A)q_b} f^*_b(Ae_a) .
\end{equation*}
The supertranspose has the following properties:
\begin{enumerate}[1)]
\item $(A+B)^*=A^*+B^*$;
\item $(AB)^*=(-1)^{\pa(A)\pa(B)} B^* A^*$;
\item $(A^*)^*=(-1)^{\pa(A)} A$.
\end{enumerate}
Although $A\mapsto A^*$ is not in general an involution, its square
is.

If $V$ is a superspace, $\Pi V$ is the superspace obtained by
exchanging the even and odd subspaces of $V$. Denote $\Pi V^*$ by
$V^\circ$. If $A:V\to W$ is a morphism of superspaces, let $A^\Pi$ be
the induced morphism from $\Pi V$ to $\Pi W$. Let
$A^\circ=A^{\Pi*}:W^\circ\to V^\circ$. This operation has the
following properties:
\begin{enumerate}[1)]
\item $(A+B)^\circ=A^\circ+B^\circ$;
\item $(AB)^\circ=(-1)^{\pa(A)\pa(B)} B^\circ A^\circ$;
\item $(A^\circ)^\circ=A$.
\end{enumerate}

If $V$ is a finite-dimensional superspace, then so is its space of
endomorphisms $\End(V)$. The Berezinian is a rational function on
$\End(V)$ that specializes to the determinant when $V$ is concentrated
in even degree, and to the inverse of the determinant when $V$ is
concentrated in odd degree. To give a formula for the Berezinian, we
break $A$ up into blocks mapping between the parity homogeneous
components of $V$: $A_{pq}$, $p,q\in\{0,1\}$, maps $V_q$ to $V_p$. We
have
\begin{equation*}
  \Ber(A) = \frac{\det(A_{00})}{\det(A_{11}-A_{10}A_{00}^{-1}A_{01})}
  \frac{\det(A_{00}-A_{01}A_{11}^{-1}A_{10})}{\det(A_{11})} .
\end{equation*}
The Berezinian has the following properties:
\begin{enumerate}[1)]
\item $\Ber(\Id)=1$;
\item $\Ber(AB)=\Ber(A)\Ber(B)$;
\item $\Ber(A^*)=\Ber(A)$ and $\Ber(A^\circ)=\Ber(A)^{-1}$.
\end{enumerate}
The general linear supergroup $\GL(V)$ is the Zariski open subset of
$\End(V)$ where $\Ber(A)\ne\{0,\infty\}$.

Let $V$ be a superspace with an odd symplectic form, that is, a
bilinear pairing $\om(-,-):V\times V\to\C$ satisfying
\begin{enumerate}[a)]
\item $\om(v,w)=0$ unless $\pa(v)+\pa(w)=1$;
\item $\om(v,w)=-\om(w,v)$.
\end{enumerate}
A polarization of $V$ is a decomposition of $V$ into a direct sum
\begin{equation*}
  V = L \oplus M ,
\end{equation*}
where $L$ and $M$ are Lagrangian subspaces for the form $\om$, that
is, maximal isotropic subspaces. The symplectic form induces a natural
isomorphism $M\cong L^\circ$. We will work in a Darboux basis of $V$,
consisting of a homogeneous basis $\{e_a\}_{a\in I}$ of $L$ and the
dual basis $\{f_a\}_{a\in I}$ of $L^\circ$, where $\pa(f_a)=p_a+1$.

The odd symplectic supergroup $\PiSP(V,\om)\subset\GL(V)$ consists of
transformations preserving the symplectic form $\om$. Decomposing an
endomorphism $A:V\to V$ into blocks
\begin{equation*}
  A =
  \begin{pmatrix}
    P & Q \\ R & S
  \end{pmatrix}
\end{equation*}
where $P:L\to L$, $Q:L^\circ\to L$, $R:L\to L^\circ$ and
$S:L^\circ\to L^\circ$, we may express the conditions to preserve the
symplectic form as the following matrix equations:
\begin{align*}
  S^\circ P &= Q^\circ R + \Id_L , & P^\circ R &= R^\circ P , \\
  Q^\circ S &= S^\circ Q , & P^\circ S &= R^\circ Q + \Id_{L^\circ} .
\end{align*}
These equations define a superquadric $\QQ(V,\om)$ in $\End(V)$,
closed under composition. The supergroup $\PiSP(V,\om)$ is the open
subset of $\QQ(V,\om)$ where $\Ber(A)$ is nonzero.

\begin{proposition}[Khudaverdian and Voronov {\cite{KV}*{Theorem 4}}]
  The restriction of the Berezinian to $\QQ(V,\om)$ is the rational
  function $\Ber(P)^2$.
\end{proposition}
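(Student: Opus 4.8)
The plan is to reduce to the dense open subset of $\QQ(V,\om)$ on which the block $P\colon L\to L$ is invertible. There the identity $\Ber(A)=\Ber(P)^2$ will be read off from a Gaussian elimination with respect to the polarization $V=L\oplus L^\circ$, namely
\begin{equation*}
  A=\begin{pmatrix}\Id_L & 0\\ RP^{-1} & \Id_{L^\circ}\end{pmatrix}
  \begin{pmatrix}P & 0\\ 0 & S-RP^{-1}Q\end{pmatrix}
  \begin{pmatrix}\Id_L & P^{-1}Q\\ 0 & \Id_{L^\circ}\end{pmatrix}=N_-\,D\,N_+ .
\end{equation*}
Since the Berezinian is multiplicative, $\Ber(A)=\Ber(N_-)\Ber(D)\Ber(N_+)$, so the proposition follows from three facts: that $\Ber(N_\pm)=1$; that $\Ber(D)=\Ber(P)\,\Ber(S-RP^{-1}Q)$, with the second factor the Berezinian of $S-RP^{-1}Q$ viewed as an endomorphism of $L^\circ$; and that $S-RP^{-1}Q=(P^\circ)^{-1}$, whence $\Ber(S-RP^{-1}Q)=\Ber(P^\circ)^{-1}=\Ber(P)$.

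The first two facts are bookkeeping with the two gradings on $V$. The matrices $N_\pm$ are unipotent and triangular with respect to the polarization, but this alone says nothing about the Berezinian, which is computed from the parity grading $V=V_0\oplus V_1$. The relevant observation is that every block $P,Q,R,S$ of the even endomorphism $A$ is itself parity preserving, so the off-diagonal entries $P^{-1}Q$ and $RP^{-1}$ carry $(L^\circ)_i$ into $L_i$ and $L_i$ into $(L^\circ)_i$ respectively; hence $N_\pm$ are block diagonal for the parity grading, with unipotent (ordinary) diagonal blocks, and therefore $\Ber(N_\pm)=1$. Likewise $D=\mathrm{diag}(P,\,S-RP^{-1}Q)$ is block diagonal for both gradings, and writing $\Ber(D)$ out in parity blocks gives the product $\Ber(P)\,\Ber(S-RP^{-1}Q)$.

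The third fact is where the superquadric equations enter, and it is the only substantive computation. Multiplying $S-RP^{-1}Q$ on the left by $P^\circ$ and using two of the defining relations of $\QQ(V,\om)$, namely $P^\circ S=R^\circ Q+\Id_{L^\circ}$ and $P^\circ R=R^\circ P$, we obtain
\begin{equation*}
  P^\circ\bigl(S-RP^{-1}Q\bigr)=\bigl(R^\circ Q+\Id_{L^\circ}\bigr)-R^\circ P\,P^{-1}Q=\Id_{L^\circ},
\end{equation*}
so $S-RP^{-1}Q=(P^\circ)^{-1}$ and $\Ber(S-RP^{-1}Q)=\Ber(P)$. Combining the three facts gives $\Ber(A)=\Ber(P)\cdot\Ber(P)=\Ber(P)^2$ on the open locus where $P$ is invertible, and since $\QQ(V,\om)$ is irreducible this is an identity of rational functions on all of $\QQ(V,\om)$.

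The main obstacle is not any single step but keeping the interplay of the polarization grading and the parity grading straight in the reductions $\Ber(N_\pm)=1$ and $\Ber(D)=\Ber(P)\,\Ber(S-RP^{-1}Q)$; once that is set up, the superquadric relations do the rest in one line. A minor point is the passage from the open cell $\{P\text{ invertible}\}$ to all of $\QQ(V,\om)$, which rests on irreducibility of the superquadric.
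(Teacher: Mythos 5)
Your proof is correct and is essentially the paper's argument: the paper uses the same Gaussian elimination (a two-factor $LU$ version of your $N_-DN_+$) to get $\Ber(A)=\Ber(P)\Ber(S-RP^{-1}Q)$, and then the identical computation $P^\circ(S-RP^{-1}Q)=\Id_{L^\circ}$ from the two quadric relations. Your explicit parity bookkeeping for $\Ber(N_\pm)=1$ and the remark about extending from the cell $\{P\ \text{invertible}\}$ are details the paper leaves implicit, but they do not change the route.
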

\begin{proof}
  From the equation
  \begin{equation*}
    \begin{pmatrix}
      P & Q \\ R & S
    \end{pmatrix}
    =
    \begin{pmatrix}
      \Id_L & 0 \\ RP^{-1} & \Id_{L^\circ}
    \end{pmatrix}
    \begin{pmatrix}
      P & Q \\ 0 & S - RP^{-1}Q
    \end{pmatrix}
  \end{equation*}
  we see that
  \begin{equation*}
    \Ber(A) = \Ber(P)\Ber(S-RP^{-1}Q) .
  \end{equation*}
  By the equation
  \begin{align*}
    P^\circ(S-RP^{-1}Q) &= P^\circ S - (P^\circ R)P^{-1}Q \\
                       &= \bigl( R^\circ Q + \Id_{L^\circ} \bigr) -
                         (R^\circ P)P^{-1}Q \\
                        &= \Id_{L^\circ} ,
  \end{align*}
  we see that $\Ber(S-RP^{-1}Q)=\Ber(P)$, and the result follows.
\end{proof}

Unlike for a superspace with an even orthosymplectic form, the
Lagrangian Grassmannian of a superspace with an odd symplectic form is
not connected. The above result holds on each component of this
Grassmannian. On the component containing the even parity subspace
$V_0$ of $V$, the proposition shows that $\Ber(A)=\det(P)^2$, and we
may conclude that the restriction of the Berezinian to the quadric
$\QQ(V,\om)$ is a polynomial in the entries of $A$.

Let $\Ber^{1/2}(A)$ be the function $\Ber(P)$ on $\QQ(V,\om)$.
\begin{corollary}
  The function $\Ber^{1/2}(A)$ satisfies
  \begin{equation*}
    \Ber^{1/2}(AB)=\Ber^{1/2}(A)\Ber^{1/2}(B) .
  \end{equation*}
\end{corollary}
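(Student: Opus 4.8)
The plan is to obtain the identity from the multiplicativity of the Berezinian, the only extra input being a connectedness argument to fix a sign. Since $\QQ(V,\om)$ is closed under composition, all of $A$, $B$, $AB$ lie in the domain of $\Ber^{1/2}$, and by the Proposition and the multiplicativity of $\Ber$,
\begin{equation*}
  \Ber^{1/2}(AB)^2 = \Ber(AB) = \Ber(A)\,\Ber(B) = \Ber^{1/2}(A)^2\,\Ber^{1/2}(B)^2 .
\end{equation*}
Hence the two functions $\Ber^{1/2}(AB)$ and $\Ber^{1/2}(A)\Ber^{1/2}(B)$ on $\QQ(V,\om)\times\QQ(V,\om)$ agree, at each point, up to a sign; writing $\phi$ for their difference and $\psi$ for their sum, we have $\phi\psi=0$, and it remains to show $\phi\equiv 0$.

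I would first restrict to the open subset $\PiSP(V,\om)\subset\QQ(V,\om)$, which by the Proposition is precisely the locus where $\Ber(P)$ is nonzero, equivalently where the block $P\colon L\to L$ lies in $\GL(L)$. On this locus the factorization used in the proof of the Proposition refines, after splitting off the block-diagonal part, to
\begin{equation*}
  A = \begin{pmatrix}\Id_L & 0\\ RP^{-1} & \Id_{L^\circ}\end{pmatrix}
      \begin{pmatrix}P & 0\\ 0 & (P^\circ)^{-1}\end{pmatrix}
      \begin{pmatrix}\Id_L & P^{-1}Q\\ 0 & \Id_{L^\circ}\end{pmatrix};
\end{equation*}
checking that each factor again lies in $\PiSP(V,\om)$ — which forces $RP^{-1}=(RP^{-1})^\circ$ and $P^{-1}Q=(P^{-1}Q)^\circ$ — exhibits $\PiSP(V,\om)$ as isomorphic, as a supervariety, to $\GL(L)$ times two affine superspaces. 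Over $\C$ the group $\GL(L)$ is irreducible, hence so is $\PiSP(V,\om)\times\PiSP(V,\om)$; since $\psi(\Id,\Id)=\Ber^{1/2}(\Id)+\Ber^{1/2}(\Id)^2=2$ by property (1) of $\Ber$, the function $\psi$ does not vanish identically, so $\phi\psi=0$ forces $\phi\equiv0$ on $\PiSP(V,\om)\times\PiSP(V,\om)$. As $\PiSP(V,\om)$ is dense in the component of $\QQ(V,\om)$ containing the identity and the functions in question are rational, the identity then persists wherever both sides are defined.

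The one delicate point is this sign-fixing step: the equality of squares is immediate, but passing to square roots genuinely requires the irreducibility (connectedness) input, and verifying that the displayed factorization is a factorization \emph{inside} $\PiSP(V,\om)$ involves the signs governing the operation $\,\cdot^{\circ}$. An alternative that avoids connectedness is to evaluate $\Ber^{1/2}$ directly along the factorization — it is $1$ on the two unipotent factors and equals $\Ber(P)$ on the block-diagonal factor — so that the general case reduces to a Sylvester-type identity of the form $\Ber(\Id_L+YX)=1$ for $X\colon L\to L^\circ$ and $Y\colon L^\circ\to L$ with $X^\circ=X$ and $Y^\circ=Y$; this last identity, which encodes the parity shift turning symmetric pairings into antisymmetric ones, is where the real content of the corollary sits.
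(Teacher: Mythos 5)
Your proof is correct and follows the same route as the paper's: both observe that the two sides have equal squares by the Proposition and the multiplicativity of $\Ber$, so their ratio is a rational function with square $1$, hence a constant sign, which is then fixed by evaluating at $A=B=\Id$. The only real difference is that you make explicit the irreducibility input (via the Gauss factorization of $\PiSP(V,\om)$ into unipotent factors and $\GL(L)$, and density in the component of $\QQ(V,\om)$ containing the identity) that the paper leaves implicit in the assertion ``and is thus constant.''
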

\begin{proof}
  This formula holds up to a rational function in the entries of $A$
  and $B$ whose square equals $1$, and is thus constant. Taking $A$
  and $B$ equal to $\Id$, the result follows.
\end{proof}

In this article, all superspaces are $\Z$-graded: in physics, this
grading is referred to as the ghost number. All of the odd symplectic
forms considered will have ghost number $-1$: that is, $\om(v,w)$
vanishes on a pair of vectors $v$ and $w$ homogeneous with respect to
the ghost number unless $\gh(v)+\gh(w)=1$.

\section{Odd symplectic supermanifolds}

We now recall the definition of the line bundle of half-forms
$\Om^{1/2}$ on a supermanifold $M$ with odd symplectic form $\om$, and
the differential operator $\Delta$ acting on sections of
$\Om^{1/2}$. This material is taken from Khudaverdian
\cite{Khudaverdian}.

We start with a supermanifold $M$, modelled on a superspace $V$. The
line bundle $\Om$ over $M$ associated to the character $\Ber(A)^{-1}$
of $\GL(V)$ is called the bundle of integral forms. An odd
non-degenerate 2-form $\om$ on $M$ induces a $\PiSP(V,\om)$-structure
on the tangent superbundle $TM$. The line bundle $\Om^{1/2}$ over $M$
associated to the character $\Ber^{1/2}(A)^{-1}$ of $\PiSP(V,\om)$ is
called the bundle of half-forms. (Khudaverdian \cite{Khudaverdian}
actually studies a similar bundle, associated to the character
$|\Ber^{1/2}(A)|^{-1}$, called the bundle of half-densities, but his
results hold if this bundle is replaced by the bundle of half-forms.)

A Darboux coordinate chart on $M$ is a coordinate chart
$\{x^a,\xi_a\}_{a\in I}$ such that the tangent vectors
$\{\p/\p x^a\}_{a\in I}$ and $\{\p/\p\xi_a\}_{a\in I}$ span Lagrangian
subspaces of the tangent superbundle $TM$, and
\begin{equation*}
  \om(\p/\p\xi_a,\p/\p x^b) = \delta_a^b ,
\end{equation*}
or equivalently,
\begin{equation*}
  \om = \sum_{a\in I} (-1)^{p_a} \, d\xi_a \wedge d x^a .
\end{equation*}
Such coordinate charts exist locally if the 2-form $\om$ is closed, in
which case, $M$ is called an odd symplectic supermanifold. Let
$\{x^a,\xi_a\}_{a\in I}$ be a Darboux coordinate chart: a total order
of the indices $I$ (or equivalently, a bijection between $I$ and the
set of natural numbers $\{1,\dots,n\}$, where $n$ is the cardinality
of $I$) yields a nonvanishing section
\begin{equation*}
  dx = dx^1 \dots dx^n
\end{equation*}
of the line bundle $\Om^{1/2}$ of half-forms. Let $\Delta_0$ be the
second-order differential operator
\begin{equation*}
  \Delta_0 = \sum_{a\in I} (-1)^{p_a} \frac{\p^2f}{\p x^a\p\xi_a} .
\end{equation*}
Define a second-order differential operator $\Delta$ on sections of
$\Om^{1/2}$ as follows: a section $\sigma$ of $\Om^{1/2}$ may be
written in the Darboux chart as $f\,dx$, and we define
\begin{equation*}
  \Delta\sigma = \Delta_0f \, dx .
\end{equation*}
It is clear that this does not depend on the choice of total ordering
of $I$, since a change of ordering only changes $dx$, and hence $f$,
by a sign.

Given a function $f$ on $M$, denote by $m(f)$ the operation of
multiplication by $f$ on the sheaf of sections of $\Om^{1/2}$. The
following properties of the operator $\Delta$ are easily derived from
the explicit formula in a Darboux coordinate chart:
\begin{enumerate}[1)]
\item $\Delta^2=\half[\Delta,\Delta]=0$;
\item $\Delta$ is a second-order differential operator, that is, if
  $f$, $g$ and $h$ are functions on $M$, then
  \begin{equation*}
    [[[\Delta,m(f)],m(g)],m(h)] = 0 .
  \end{equation*}
\end{enumerate}

The Batalin--Vilkovisky antibracket is the Poisson bracket associated
to the odd symplectic form $\om$, given by the formula
\begin{equation*}
  m \bigl( (f,g) \bigr) = (-1)^{\pa(f)} [[\Delta,m(f)],m(g)] .
\end{equation*}
\begin{proposition}
  The antibracket is antisymmetric
  \begin{equation*}
    (g,f) = - (-1)^{(\pa(f)+1)(\pa(g)+1)} \, (f,g)
  \end{equation*}
  and satisfies the Jacobi relation
  \begin{equation*}
    (f,(g,h)) = ((f,g),h) + (-1)^{(\pa(f)+1)\pa(g)} \, (g,(f,h)) .
  \end{equation*}
\end{proposition}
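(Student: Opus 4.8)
The plan is to deduce both statements formally from the properties of $\Delta$ recorded above: that $\Delta^2=0$, that $\Delta$ is a second-order differential operator, and that multiplication operators commute, $[m(f),m(g)]=0$, a reflection of the graded-commutativity of functions on $M$. I would also use that $m$ is injective --- if $m(f)=0$ then, applying $m(f)$ to the section $dx$, one gets $f\,dx=0$, hence $f=0$ --- so that it suffices to verify the two identities after applying $m$ to both sides, i.e.\ as identities between differential operators on $\Om^{1/2}$. The tools are the graded Jacobi identity for the commutator of operators, $[[A,B],C]=[A,[B,C]]-(-1)^{\pa(A)\pa(B)}[B,[A,C]]$, and the graded antisymmetry $[B,A]=-(-1)^{\pa(A)\pa(B)}[A,B]$. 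It is convenient to set $D_f=[\Delta,m(f)]$, an operator of parity $\pa(f)+1$, so that the definition of the antibracket reads $m((f,g))=(-1)^{\pa(f)}[D_f,m(g)]$; that $\Delta$ is second-order is exactly what makes this right-hand side a multiplication operator.

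For antisymmetry, I would apply the graded Jacobi identity to $[[\Delta,m(f)],m(g)]$ with $A=\Delta$, $B=m(f)$, $C=m(g)$: the term $[\Delta,[m(f),m(g)]]$ vanishes since $[m(f),m(g)]=0$, leaving $[[\Delta,m(f)],m(g)]=-(-1)^{\pa(f)}[m(f),[\Delta,m(g)]]$, and moving $m(f)$ past $[\Delta,m(g)]=D_g$ by graded antisymmetry rewrites this as $(-1)^{\pa(f)\pa(g)}[[\Delta,m(g)],m(f)]$. Substituting the definition of the antibracket on both sides and simplifying the signs gives $(g,f)=-(-1)^{(\pa(f)+1)(\pa(g)+1)}(f,g)$. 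This step uses only the commutativity of multiplication operators, not $\Delta^2=0$.

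For the Jacobi relation, the extra ingredient is $\Delta^2=0$, which I would use through the identity $[\Delta,D_f]=[\Delta,[\Delta,m(f)]]=\half[[\Delta,\Delta],m(f)]=0$, since $[\Delta,\Delta]=2\Delta^2=0$. Expanding, $m((f,(g,h)))$ equals a fixed sign times $[D_f,[D_g,m(h)]]$; the graded Jacobi identity splits this into $[[D_f,D_g],m(h)]$ plus a sign times $[D_g,[D_f,m(h)]]$, and the latter is, up to the appropriate sign, $m((g,(f,h)))$ --- this produces the last term of the asserted relation. For $[[D_f,D_g],m(h)]$, I would apply the graded Jacobi identity once more to $[D_f,D_g]=[[\Delta,m(f)],D_g]$: the summand involving $[\Delta,D_g]$ drops out by the identity above, and $[m(f),D_g]=[m(f),[\Delta,m(g)]]$ is, up to sign, $[[\Delta,m(g)],m(f)]=(-1)^{\pa(g)}m((g,f))$, so $[D_f,D_g]$ is a scalar multiple of $D_{(g,f)}=[\Delta,m((g,f))]$; hence $[[D_f,D_g],m(h)]$ is a scalar multiple of $m(((g,f),h))$, which becomes the $((f,g),h)$ term once $(g,f)$ is rewritten by the antisymmetry just proved. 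Collecting the Koszul signs then yields the identity.

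I expect no conceptual obstacle; the only delicate point is the bookkeeping of the Koszul signs, which has to be done carefully at each application of the graded Jacobi and antisymmetry identities. Conceptually this is the standard fact that the bracket canonically attached to an odd, square-zero, second-order differential operator on a graded-commutative algebra makes it into an odd Poisson (Gerstenhaber) algebra, and the two parts of the proposition are precisely two of the axioms of such a structure.
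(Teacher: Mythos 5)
Your argument is correct, and the antisymmetry part is essentially word-for-word the paper's: apply the graded Jacobi identity to $[[\Delta,m(f)],m(g)]$ and kill the term $[\Delta,[m(f),m(g)]]$ using $[m(f),m(g)]=0$; the paper merely packages this as showing that $m\bigl((f,g)+(-1)^{(\pa(f)+1)(\pa(g)+1)}(g,f)\bigr)=(-1)^{\pa(f)}[\Delta,[m(f),m(g)]]=0$. For the Jacobi relation the ingredients are identical ($[\Delta,[\Delta,m(f)]]=\half[[\Delta,\Delta],m(f)]=0$ together with the second-order property, which is what makes each inner commutator a multiplication operator), but the organization differs: the paper starts from the vanishing triple commutator $[[[\Delta,m(f)],m(g)],m(h)]=0$, applies $[\Delta,-]$ to it, and expands everything in one chain of commutator identities; you instead expand $[[\Delta,m(f)],[[\Delta,m(g)],m(h)]]$ directly by the operator Jacobi identity and route the $((f,g),h)$ term through the auxiliary fact that $[[\Delta,m(f)],[\Delta,m(g)]]$ is, up to sign, $[\Delta,m((f,g))]$. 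That auxiliary fact is precisely the identity $\H_{(f,g)}=[\H_f,\H_g]$, which the paper proves as part 3) of the proposition immediately following this one, by the same two-line computation you sketch. So your route in effect establishes that later proposition first and deduces the Jacobi relation from it; this is a legitimate and slightly more modular arrangement of the same formal argument, at the cost of one extra sign-conversion from $(g,f)$ to $(f,g)$ via the antisymmetry already proved. Your appeal to injectivity of $m$ (evaluate on the local section $dx$) is a point the paper leaves implicit but is needed in either version.
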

\begin{proof}
  To prove antisymmetry, we use the formula $[f,g]=0$:
  \begin{multline*}
    m \bigl( (f,g) + (-1)^{(\pa(f)+1)(\pa(g)+1)} \, (f,g) \bigr) \\
    \begin{aligned}
      &= (-1)^{\pa(f)} [[\Delta,m(f)],m(g)] -
      (-1)^{\pa(f)\pa(g)+\pa(f)} [[\Delta,m(g)],m(f)] \\
      &= (-1)^{\pa(f)} [[\Delta,m(f)],m(g)] + [m(f),[\Delta,m(g)]] \\
      &= (-1)^{\pa(f)} [\Delta,[m(f),m(g)]] = 0 .
    \end{aligned}
  \end{multline*}
  To prove the Jacobi relation, we use the formulas
  $[\Delta,[\Delta,m(f)]]=\half[[\Delta,\Delta],m(f)]=0$ and
  $[[[\Delta,m(f)],m(g)],m(h)]=0$:
  \begin{align*}
    0 &= - [\Delta,[[[\Delta,m(f)],m(g)],m(h)]] \\
      &= (-1)^{\pa(f)} [[[\Delta,m(f)],[\Delta,m(g)]],m(h)] \\
      &+ (-1)^{\pa(f)+\pa(g)} [[[\Delta,m(f)],m(g)],[\Delta,m(h)]] \\
      &= (-1)^{\pa(f)} [[\Delta,m(f)],[[\Delta,m(g)],m(h)]] \\
      &+ (-1)^{(\pa(f)+1)\pa(g)} [[\Delta,m(g)],[[\Delta,m(f)],m(h)]] \\
      & + (-1)^{\pa(h)(\pa(f)+\pa(g)+1)}
        [[\Delta,m(h)],[[\Delta,m(f)],m(g)]] \\
      &= (-1)^{\pa(g)} (f,(g,h)) +
        (-1)^{\pa(f)\pa(g)+\pa(f)} (g,(f,h)) \\
      & + (-1)^{\pa(h)(\pa(f)+\pa(g))+\pa(f)} (h,(f,g)) \\
      &= (-1)^{\pa(g)} \bigl( (f,(g,h)) -
        (-1)^{(\pa(f)+1)(\pa(g)+1)} (g,(f,h)) - ((f,g),h) \bigr) .
        \qedhere
  \end{align*}
\end{proof}

Let $\H_f$ be the first-order differential operator given by the
formula
\begin{equation*}
  \H_f = (-1)^{\pa(f)} [\Delta,m(f)] .
\end{equation*}

\begin{proposition}
  \mbox{} \\[-15pt]
  \begin{enumerate}[1),font=\upshape]
  \item $[\H_f,m(g)]=m\bigl( (f,g) \bigr)$
  \item $\H_{fg}=m(f) \H_g + (-1)^{\pa(f)\pa(g)} m(g) \H_f +
    (-1)^{\pa(g)} m\bigl( (f,g) \bigr)$
  \item $\H_{(f,g)}=[\H_f,\H_g]$
  \end{enumerate}
\end{proposition}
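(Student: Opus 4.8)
The plan is to deduce all three identities directly from the definition $\H_f = (-1)^{\pa(f)}[\Delta,m(f)]$ together with the two properties of $\Delta$ recorded above: $\Delta^2 = \half[\Delta,\Delta] = 0$, and the second-order condition, which is what makes $m((f,g))$ well defined. The only general facts about graded commutators I will use are the graded Leibniz rule $[\Delta,AB] = [\Delta,A]\,B + (-1)^{\pa(A)}\,A\,[\Delta,B]$, valid for any operator $\Delta$ since $\Delta$ is odd, and the graded Jacobi identity; both are standard. Throughout, the bookkeeping rests on the parities $\pa(\Delta) = 1$, $\pa(\H_f) = \pa(f)+1$, and $\pa((f,g)) = \pa(f)+\pa(g)+1$. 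Claim 1) is then immediate: unwinding the definitions, $[\H_f,m(g)] = (-1)^{\pa(f)}[[\Delta,m(f)],m(g)]$, which is exactly $m((f,g))$.

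For claim 2), I start from $m(fg) = m(f)\,m(g)$ and apply the graded Leibniz rule to $[\Delta,m(f)m(g)]$, rewriting $[\Delta,m(f)] = (-1)^{\pa(f)}\H_f$ and $[\Delta,m(g)] = (-1)^{\pa(g)}\H_g$; multiplying through by $(-1)^{\pa(f)+\pa(g)}$ gives $\H_{fg} = (-1)^{\pa(g)}\,\H_f\,m(g) + m(f)\,\H_g$. I then move $\H_f$ past $m(g)$ using $\H_f\,m(g) = [\H_f,m(g)] + (-1)^{(\pa(f)+1)\pa(g)}\,m(g)\,\H_f$ and substitute claim 1) for $[\H_f,m(g)]$; collecting the signs yields the stated formula. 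This step is pure sign bookkeeping.

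For claim 3), the key remark is that claim 1) may be rewritten as $m((f,g)) = [\H_f,m(g)]$, so that $\H_{(f,g)} = (-1)^{\pa(f)+\pa(g)+1}[\Delta,[\H_f,m(g)]]$. Expanding the inner double bracket by the graded Jacobi identity produces two terms. The first involves $[\Delta,\H_f] = (-1)^{\pa(f)}[\Delta,[\Delta,m(f)]] = \half(-1)^{\pa(f)}[[\Delta,\Delta],m(f)] = 0$, hence vanishes; the second is $(-1)^{\pa(f)+1}[\H_f,[\Delta,m(g)]] = (-1)^{\pa(f)+\pa(g)+1}[\H_f,\H_g]$. Multiplying by the prefactor $(-1)^{\pa(f)+\pa(g)+1}$ gives $\H_{(f,g)} = [\H_f,\H_g]$.

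The computations are entirely routine; the only real obstacle is sign management, in particular keeping the three parities above straight and applying the graded Leibniz and Jacobi rules with the correct sign conventions. No input beyond $\Delta^2 = 0$ is needed here — notably, the second-order property of $\Delta$ enters only implicitly, to guarantee that $m((f,g))$ is multiplication by a function; the genuine use of that property is in the Jacobi relation for the antibracket, proved in the previous proposition, not in this one.
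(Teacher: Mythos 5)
Your proof is correct and follows essentially the same route as the paper's: part 1) read off from the definition, part 2) by the graded Leibniz rule for $[\Delta,m(f)m(g)]$ followed by commuting $\H_f$ past $m(g)$, and part 3) by the graded Jacobi identity together with $[\Delta,[\Delta,m(f)]]=\half[[\Delta,\Delta],m(f)]=0$ (the paper merely runs the Jacobi expansion starting from $[\H_f,\H_g]$ rather than from $\H_{(f,g)}$, which is the same computation in the opposite direction). Your closing remark about where the second-order property of $\Delta$ is actually used is accurate.
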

\begin{proof}
  The first formula follows immediately from the definition of $\H_f$.
  The second formula is proved as follows:
  \begin{align*}
    \H_{fg}
    &= (-1)^{\pa(f)+\pa(g)} [\Delta,m(fg)] \\
    &= (-1)^{\pa(g)} m(f)[\Delta,m(g)] + (-1)^{\pa(f)+\pa(g)} [\Delta,m(f)]m(g) \\
    &= m(f)\H_g + (-1)^{\pa(g)} \H_f m(g) \\
    &= m(f) \H_g + (-1)^{\pa(g)+(\pa(f)+1)\pa(g)} m(g) \H_f + (-1)^{\pa(g)}
      m\bigl( (f,g) \bigr) .
      \intertext{To prove the third formula, we argue as follows:}
      [\H_f,\H_g] &= (-1)^{\pa(f)+\pa(g)} [[\Delta,m(f)],[\Delta,m(g)]] \\
    &= (-1)^{\pa(g)+1} [\Delta,[[\Delta,m(f)],m(g)]]
      - (-1)^{\pa(g)+1} [[\Delta,[\Delta,m(f)]],m(g)] .
  \end{align*}
  The first term equals $\H_{(f,g)}$, and the second term vanishes.
\end{proof}

If $f$ is a function of odd parity on $M$, the operator
$g\mapsto(f,g)$ is an even vector field $H_f$ on $M$, called the
Hamiltonian vector field associated to $f$. Under the infinitesimal
flow $\exp(\epsilon H_f)$, the Darboux coordinate chart transforms to
\begin{align*}
  x^a + \epsilon (f,x^a)
  &= x^a - \epsilon \frac{\p f}{\p\xi_a} &
  \xi_a + \epsilon (f,\xi_a)
  &= \xi_a + \epsilon \frac{\p f}{\p x^a} .
\end{align*}
The section $dx$ of the bundle of half-forms transforms to
\begin{equation*}
  \Ber\Bigl( \Id - \epsilon \frac{\p^2f}{\p x^j\p\xi_a} \Bigr) \, dx =
  dx - \epsilon \Delta (f\,dx) .
\end{equation*}
 A half-form $\sigma=g\,dx$ transforms to
\begin{align*}
  \bigl( g + \epsilon (f,g) \bigr) \, dx - \epsilon g \Delta(f\,dx)
  &= \sigma - \epsilon [[\Delta,f],g]\,dx - \epsilon g \Delta(f\,dx) \\
  &= \sigma + \epsilon \H_f\sigma .
\end{align*}
We may interpret the differential operator $\H_f$ as the lift of the
vector field $H_f$ to the bundle of half-forms. The operator $\Delta$
is invariant under this action, by the formula
\begin{equation*}
  [\H_f,\Delta] = - [\Delta,[\Delta,m(f)]] = 0 .
\end{equation*}

A diffeomorphism preserving the antibracket is called a
(Batalin--Vilkovisky) canonical transformation. These transformations
form a pseudogroup, which is generated by nonautonomous Hamiltonian
flows (associated to time-dependent Hamiltonians), in the sense that
for any canonical transformation $f:U\to V$, each point $x\in U$ has
an open neighborhood on which the restriction of $f$ may be written as
the composition of flows associated to nonautonomous Hamiltonian
vector fields. It follows that the operator $\Delta$ is independent of
the choice of Darboux coordinate chart. This is the strategy adopted
by Khudaverdian in his proof of this theorem
\cite{Khudaverdian}*{Section~2}; \v{S}evera \cite{Severa} has given
another proof, which identifies Khudaverdian's operator $\Delta$ with
the differential on the $E_2$ page of the spectral sequence associated
to the Hodge filtration (filtration by degree of differential forms)
in the de Rham complex of $M$ with deformed differential $d+\om$.

An orientation of an odd symplectic manifold $M$ is a
nowhere-vanishing section $\sigma$ of the bundle of half-forms
$\Om^{1/2}$ such that $\Delta\sigma=0$ (Behrend and Fantechi
\cite{BF}). In particular, $\sigma$ defines a global trivialization of
$\Om^{1/2}$. If we choose a Darboux coordinate chart and express
$\sigma$ as $e^Sdx$, we may write the equation $\Delta\sigma=0$ as
\begin{equation*}
  \Delta_0S + \half (S,S) = 0 .
\end{equation*}
This equation is known as the quantum master equation. In applications
of this equation to quantum field theory, there is an additional
parameter $\hbar$. In this case, $\sigma$ equals $e^{S/\hbar}dx$,
where $S$ is itself a power series in $\hbar$:
\begin{equation*}
  S = \sum_{n=0}^\infty \hbar^n S_n .
\end{equation*}
In this setting, the quantum master equation becomes
\begin{equation*}
  \hbar \Delta_0S + \half (S,S) = 0 .
\end{equation*}
Expanding in powers of $\hbar$, this equation is seen to be equivalent
to the series of equations
\begin{equation*}
  \begin{cases}
    (S_0,S_0)=0 , & \\
    \displaystyle
    \Delta_0 S_{n-1} + (S_0,S_n) + \frac12 \sum_{i=1}^{n-1}
    (S_i,S_{n-i}) = 0 , & n>0 .    
  \end{cases}
\end{equation*}

\section{Simplicial supermanifolds}

A simplicial supermanifold $M_\bull$ consists of the following data:
for each $k\ge0$, $M_k$ is a supermanifold, and there are face maps
$d_i:M_k\to M_{k-1}$ and degeneracy maps $s_i:M_k\to M_{k+1}$
satisfying the usual simplicial relations. Let $[k]=\{0,\dots,k\}$ be
the set of vertices of the $k$-simplex. If $\mu:[k]\to[\ell]$ is a
function preserving the ordering of the vertices, then there is a
differentiable map $\mu^*:M_\ell\to M_k$ satisfying
$(\mu\nu)^*=\nu^*\mu^*$. The face map $d_i$ is associated to the
function
\begin{align*}
  \mu(j)
  &=
  \begin{cases}
    j , & j<i , \\
    j+1 , & j\ge i ,
  \end{cases}
            \intertext{while the degeneracy map $s_i$ is associated
            to the function}
  \mu(j)
  &=
  \begin{cases}
    j , & j\le i , \\
    j-1 , & j>i .
  \end{cases}
\end{align*}            

For example, suppose that $M$ is a supermanifold and
$\CU=\{U_\alpha\}$ is a locally finite open cover of $M$. If
$(\alpha_0\dots\alpha_k)$ is a sequence of indices of the open sets in
the cover $\CU$ of $M$, we denote by
\begin{equation*}
  U_{\alpha_0\dots\alpha_k}
\end{equation*}
their intersection. We obtain a simplicial supermanifold $\CU_\bull$
by setting
\begin{equation*}
  \CU_k = \coprod_{\alpha_0\dots\alpha_k} U_{\alpha_0\dots\alpha_k}
  ,
\end{equation*}
where $\mu^*$ is the inclusion of
$U_{\alpha_0\dots\alpha_\ell}\subset\CU_\ell$ into
$U_{\alpha_{\mu(0)}\dots\alpha_{\mu(k)}}\subset\CU_k$. In particular, the
face map $d_i:\CU_k\to\CU_{k-1}$ is the open embedding of
$U_{\alpha_0\dots\alpha_k}$ into
\begin{equation*}
  U_{\alpha_0\dots\widehat{\alpha}_i\dots\alpha_k} \subset \CU_{k-1} ,
\end{equation*}
while the degeneracy map $s_i:\CU_k\to\CU_{k+1}$ is the identification
of $U_{\alpha_0\dots\alpha_k}$ with
\begin{equation*}
  U_{\alpha_0\dots\alpha_i\alpha_i\dots\alpha_k} \subset \CU_{k+1} .
\end{equation*}
Every map $\mu^*$ may be factored into a finite sequence of face maps
followed by a sequence of degeneracy maps, so it actually suffices to
consider just these two sets of maps.

The $k$-simplex $\Delta^k$ is the convex hull of the unit coordinate
vectors in $\R^{[k]}=\R^{\{0,\dots,k\}}$. We denote the coordinates on
$\R^{[k]}$ by $(t_0,\dots,t_k)$; on $\Delta^k$, they satisfy
$t_0+\dots+t_k=1$ and $t_i\ge0$. A function $\mu:[k]\to[\ell]$
preserving the order of the vertices induces a map
$\mu_*:\Delta^k\to\Delta^\ell$: the vertices of $\Delta^k$ are mapped
to the vertices of $\Delta^\ell$ following the function $\mu$, and the
map is the affine extension to the convex hull of these points. In
particular, the coface map $d^i:\Delta^{k-1}\to\Delta^k$,
$0\le i\le k$, is given by formula
\begin{equation*}
  d^i(t_0,\dots,t_{k-1}) = (t_0,\dots,t_{i-1},0,t_i,\dots,t_{k-1}) ,
\end{equation*}
and the codegeneracy map $s^i:\Delta^{k+1}\to\Delta^k$, $0\le i\le k$,
is given by formula
\begin{equation*}
  s^i(t_0,\dots,t_{k+1}) = (t_0,\dots,\widehat{t}{}_i,\dots,t_{k+1}) .
\end{equation*}
These maps go in the opposite direction to the maps in a simplicial
manifold: $\Delta^\bull$ is an example of a cosimplicial manifold
(with corners). Let $\Om_k$ be the de Rham complex of $\Delta^k$,
with differential $\delta$: this is a simplicial complex of
superspaces (and even a simplicial differential graded algebra).

If $V$ is a vector bundle on a supermanifold $M$, there is an
inclusion with dense image
\begin{equation*}
  \Gamma(M,V) \o \Om_k \to
  \Gamma(M\times\Delta^k,V\boxtimes\Lambda^*T^*\!\Delta^k) .
\end{equation*}
The target of this morphism may be thought of as a completed tensor
product
\begin{equation*}
  \Gamma(M\times\Delta^k,V\boxtimes\Lambda^*T^*\!\Delta^k) =
  \Gamma(M,V) \ho \Om_k .
\end{equation*}

\begin{definition}
  A morphism $f:M\to N$ of supermanifolds is \textbf{\'etale} if it is
  locally an open embedding, or equivalently, the tangent maps
  $d_x\mu^*$ are isomorphisms at all points $x\in M$, and a
  \textbf{cover} if it is \'etale and surjective.
\end{definition}

A simplicial odd symplectic supermanifold is a simplicial manifold
$M_\bull$ such that
\begin{enumerate}[a)]
\item each supermanifold $M_k$ is odd symplectic, and
\item the morphisms $\mu^*:M_\ell\to M_k$ are \'etale and preserve the
  odd symplectic structures.
\end{enumerate}
An example of a simplicial odd symplectic supermanifold is the
\v{C}ech nerve $\CU_\bull$ associated to an open cover $\CU$ of an odd
symplectic supermanifold.

Associated to a simplicial odd symplectic supermanifold are the
cosimplicial commutative superalgebra $\CO(\M_\bull)$ of functions on
$M_\bull$ and the cosimplicial $\CO(\M_\bull)$-supermodule
$\Om^{1/2}(\M_\bull)$. The Thom-Whitney normalization of
$\CO(\M_\bull)$ is the differential graded commutative superalgebra
\begin{multline*}
  \Tot \CO(M_\bull) = \Bigl\{ \bigl( f_k \bigr) \in \prod_{k=0}^\infty
  \CO(M_k) \ho \Om_k \Bigm| \\
  \text{for all $\mu:[k]\to[\ell]$, we have $(\mu^*\ho1)f_\ell =
    (1\ho\mu_*)f_k \in \CO(M_k) \ho \Om_\ell$} \Bigr\}
\end{multline*}
with differential $\delta$ induced by the de~Rham differential
$\delta$ on $\Om_k$ for different $k$. In \cite{covariant}, we showed
that topological terms may be incorporated into the
Batalin--Vilkovisky formalism by replacing the classical master
equation $\half(S,S)=0$ for an element of $\CO(M)$ by the classical
master equation
\begin{equation*}
  \delta S + \half(S,S) = 0
\end{equation*}
in $\Tot \CO(M_\bull)$. (In \cite{covariant}, we denote this
totalization by $\Tot_\TW$; since it is the only totalization functor
employed in this article, we will write $\Tot$ instead.)

The Thom-Whitney normalization of the cosimplicial superspace
$\Om^{1/2}(M_\bull)$ is the differential graded
$\Tot \CO(M_\bull)$-module
\begin{multline*}
  \Tot \Om^{1/2}(M_\bull) = \Bigl\{ \bigl( \sigma_k \bigr) \in
  \prod_{k=0}^\infty \Om^{1/2}(M_k) \ho \Om_k \Bigm| \\
  \text{for all $\mu:[k]\to[\ell]$, we have
    $(\mu^*\ho1)\sigma_\ell = (1\ho\mu_*)\sigma_k \in
    \Om^{1/2}(M_k) \ho \Om_\ell$} \Bigr\} .
\end{multline*}
The operator $\delta+\Delta$ descends to $\Tot \Om^{1/2}(M_\bull)$,
turning it into a complex. In this setting, the quantum master
equation becomes
\begin{equation*}
  ( \delta + \Delta ) \sigma = 0 .
\end{equation*}
In the presence of the parameter $\hbar$, this is modified to
$(\delta+\hbar\Delta)\sigma=0$.

\section{Families of Lagrangian submanifolds}

Suppose that $M$ is an odd symplectic supermanifold, and that
$\iota:L\subset M$ is a Lagrangian submanifold. In other words, the
restriction of the odd symplectic form $\om$ to $L$ induces an
isomorphism between the tangent superbundle $TL$ of $L$ and the
conormal superbundle $N^*\!L$. In particular, a Lagrangian submanifold
is coisotropic: the ideal of functions vanishing on $L$ is closed
under the antibracket.

It is a basic result of odd symplectic geometry that given a
Lagrangian submanifold $L$ and a point $x\in L$, there is a Darboux
coordinate chart $U$ around $x$ such that $U\cap L$ is the submanifold
$\{\xi_a=0\}$. In other words, a neighborhood of $x$ is identified
with a neighbourhood of $x$ in the odd cotangent bundle $\Pi
T^*\!L$. The proof is identical to the proof in the even case
(Weinstein \cite{Weinstein}).

The restriction $\iota^*\Om^{1/2}$ of the bundle of half-forms
$\Om^{1/2}$ on $M$ to a Lagrangian submanifold $L$ is isomorphic to
the bundle of integral forms on $L$, that is, the Berezinian bundle
$\Ber(T^*\!L)$ of the cotangent bundle of $L$. (If $L$ is a manifold,
this is the same as the bundle of differential forms of top degree on
$L$.) The integral is an invariantly defined linear form on the space
of integral forms of compact support on $L$: thus, the restriction map
induces a linear form $\tintL : \Om^{1/2}_c(M) \to \C$ on the bundle
of compactly supported half-forms on $M$.

We now consider the generalization of this construction when
$\iota:L\times\Delta^k\to M$ is a family of Lagrangian submanifolds
parametrized by the $k$-simplex $\Delta^k$. Taking the derivative of
the map $\iota$ in the simplicial direction, we obtain a family of
vector fields $X\in\Gamma(L,\iota^*TM)\ho \Om_k$ over $L$,
parametrized by one-forms on $\Delta^k$. Take the contraction with the
odd symplectic form $\om$
\begin{equation*}
  X\inprod\om \in \Gamma(L,\iota^*T^*\!M) \ho \Om_k
\end{equation*}
to convert this vector field into a differential in the ambient
manifold $M$. Applying the bundle map
$\iota^*:\iota^*T^*\!M\to T^*\!L$ adjoint to the differential
$\iota_*:TL\to \iota^*TM$, we obtain a family of 1-forms on $L$:
\begin{equation*}
  \iota^*(X\inprod\om) \in \Om^1(L) \ho \Om_k .
\end{equation*}
The condition that $L_t$ is Lagrangian for all $t\in\Delta^k$ is
equivalent to the condition that this family of one-forms is closed:
\begin{equation*}
  d\iota^*(X\inprod\om) = 0 \in \Om^2(L) \ho \Om_k .
\end{equation*}
Here, we denote by $d$ the differential in the first factor $L$ of a
product $L\times\Delta^k$ of a supermanifold with a simplex, and by
$\delta$ the differential in the second factor $\Delta^k$. Since the
de Rham cohomology of $L$ vanishes in nonzero ghost number, there is a
uniquely determined family of one-forms
\begin{equation*}
  \eta \in \CO(L) \ho \Om_k
\end{equation*}
such that $d\eta=\iota^*(X\inprod\om)$ and $\delta\eta=0$.

Let us rewrite this equation in a Darboux coordinate system on
$M$. Thus, suppose that $L$ has coordinates $\x^a$ and $\iota$ is
given in a Darboux coordinate system on $M$ by the equations
\begin{equation*}
  \begin{cases}
    x^a = x^a(\x,t) , & \\
    \xi_a = \xi_a(\x,t) , &
  \end{cases}
\end{equation*}
where $x^a(\x,0)=\x^a$ and $\xi_a(\x,0)=0$. The one-form
$\eta=\eta_i(x,t)\, dt^i$ satisfies the differential equation
\begin{equation*}
  \frac{\p\xi_a(\x,t)}{\p t^i} = \frac{\p\eta_i(\x,t)}{\p\x^a} .
\end{equation*}
This implies that $\delta\eta$ is independent of $\x$: thus, if
$\delta\eta$ vanishes at any point in $L$, it vanishes
everywhere. This may always be arranged, by replacing $\eta$ by
\begin{equation*}
  \eta + \sum_{i=0}^k t^i d\eta_i(\x_0,t) .
\end{equation*}

Since we are only concerned with families of Lagrangian submanifolds
up to reparametrization, we may assume that, at least in a
neighborhood of $(x_0,0)\in L\times\Delta^k$, we have $x^a=\x^a$. In
this case, $L$ is (locally) a family of sections of the odd cotangent
bundle $\Pi T^*\!L$:
\begin{equation*}
  \begin{cases}
    x^a = \x^a , & \\
    \xi_a = \xi_a(\x,t) . &
  \end{cases}
\end{equation*}

The following theorem is a mild generalization of a result of
Mikhalkov and Schwarz \cite{MS}*{(3.6)}.
\begin{theorem}
  \label{MS}
  Let $\iota:L\times\Delta^k\to M$ be a proper family of Lagrangian
  submanifolds of $M$, and let $\sigma\in\Om^{1/2}_c(M) \ho \Om_k$ be
  a family of compactly supported half-forms on $M$. Then
  \begin{equation*}
    \delta \int_L e^{-\eta/\hbar} \, \iota^*\sigma =
    \int_L e^{-\eta/\hbar} \, \iota^*(\delta+\hbar\Delta)\sigma .
  \end{equation*}
\end{theorem}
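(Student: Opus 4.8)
The proof follows the template of Weil's proof of the de~Rham theorem: reduce to a local model over a Darboux chart, and there compute explicitly. The two ingredients are the coordinate formula for $\Delta$ from the preceding sections and the vanishing $\int_L\iota^*\Delta\tau=0$ for a compactly supported half-form $\tau$ on $M$ and any Lagrangian submanifold (in a Darboux chart adapted to the Lagrangian, $\iota^*\Delta\tau$ is the divergence $\sum_a(-1)^{p_a}\,\p\bigl(\iota^*(\p\tau/\p\xi_a)\bigr)/\p x^a$, an exact integral form, so its integral vanishes by Stokes's theorem; the general case follows by a partition of unity on $M$).

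\emph{Reduction to a local statement.} Both sides of the asserted identity are $\C$-linear in $\sigma$, and $\delta$ and $\Delta$ are local operators on $M$, so by a partition of unity on $M\times\Delta^k$ subordinate to a cover by products $U\times V$ of Darboux charts $U$ of $M$ with small balls $V\subset\Delta^k$ it suffices to treat $\sigma$ supported in one such $U\times V$. Since $\iota$ is proper its image is closed, so charts $U$ disjoint from $\iota(L\times\Delta^k)$ contribute nothing; for the rest, shrink $U$ and $V$ so that over $V$ the submanifolds $L_t$ meet $U$ in the normal form recorded before the theorem, namely $\iota$ is (after reparametrisation) a disjoint union of graphs $x^a=\x^a$, $\xi_a=\xi_a(\x,t)$ with $\xi_a(\x,t_0)=0$ at the centre $t_0$ of $V$. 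On such a chart $\int_L$ restricts to an integral over the part of $L$ below $U$, the half-form $dx=dx^1\cdots dx^n$ restricts along each graph to $d\x^1\cdots d\x^n$ by the identification $\iota^*\Om^{1/2}\cong\Ber(T^*\!L)$ recalled above, and, as $\iota^*\sigma$ has no exterior degree in the $L$-directions, $\delta$ passes under $\int_L$ as differentiation under the integral sign.

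\emph{Local computation.} Write $\sigma=\sum_I s_I(x,\xi,t)\,dx\,dt^I$ in the chart and extend each function $\eta_i$ on $L$ to a function on $U$ independent of the coordinates $\xi_a$, so that $e^{-\eta/\hbar}$ is a function on $U$; note $\eta=\sum_i\eta_i\,dt^i$ is an even element (ghost number $-1$, exterior degree $1$). We use the relation $\p\xi_a/\p t^i=\p\eta_i/\p x^a$ recorded before the theorem and the hypothesis $\delta\eta=0$. Since $\delta\eta=0$ we have $\delta(e^{-\eta/\hbar})=0$, hence $\delta\int_L e^{-\eta/\hbar}\iota^*\sigma=\int_L e^{-\eta/\hbar}\,\delta(\iota^*\sigma)$. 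The difference $\delta(\iota^*\sigma)-\iota^*(\delta\sigma)$ is the term produced when the simplicial derivatives fall, through the chain rule, on the functions $\xi_a(\x,t)$ substituted after restriction to $L_t$; by $\p\xi_a/\p t^i=\p\eta_i/\p x^a$ this is exactly the variation of the half-form $\iota^*\sigma$ under the Hamiltonian flow generated by $\eta_i$ in the $t^i$-direction, so by the transformation law for half-forms recalled above it equals $\iota^*(\H_\eta\sigma)$, where $\H_\eta=(-1)^{\pa(\eta)}[\Delta,m(\eta)]$ is formed from the function $\eta$. Thus
\begin{equation*}
  \delta\int_L e^{-\eta/\hbar}\iota^*\sigma
  =\int_L e^{-\eta/\hbar}\iota^*(\delta\sigma)+\int_L e^{-\eta/\hbar}\iota^*(\H_\eta\sigma).
\end{equation*}
Next, the formula $\Delta_0=\sum_a(-1)^{p_a}\,\p^2/\p x^a\p\xi_a$ gives, for any even function $g$ on $U$ independent of the $\xi_a$ and any half-form $\tau$, that $[\Delta,m(g)]\tau=\bigl(\sum_a(-1)^{p_a}(\p g/\p x^a)\,\p/\p\xi_a\bigr)\tau$. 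Applied with $g=e^{-\eta/\hbar}$, for which $\p g/\p x^a=-\hbar^{-1}(\p\eta/\p x^a)g$, and using $\p\eta_i/\p x^a=\p\xi_a/\p t^i$, this yields $[\Delta,m(e^{-\eta/\hbar})]=-\hbar^{-1}m(e^{-\eta/\hbar})\,\H_\eta$ on half-forms, that is,
\begin{equation*}
  e^{-\eta/\hbar}\,\H_\eta\sigma=\hbar\,e^{-\eta/\hbar}\,\Delta\sigma-\hbar\,\Delta\bigl(e^{-\eta/\hbar}\sigma\bigr).
\end{equation*}
Since $e^{-\eta/\hbar}\sigma$ is a compactly supported half-form on $M$, the vanishing $\int_L\iota^*\Delta(e^{-\eta/\hbar}\sigma)=0$ recalled at the outset gives $\int_L e^{-\eta/\hbar}\iota^*(\H_\eta\sigma)=\hbar\int_L e^{-\eta/\hbar}\iota^*(\Delta\sigma)$, and combining with the previous display yields $\delta\int_L e^{-\eta/\hbar}\iota^*\sigma=\int_L e^{-\eta/\hbar}\iota^*(\delta+\hbar\Delta)\sigma$.

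\emph{Main obstacle.} The genuinely delicate point is the sign bookkeeping: verifying that the chain-rule term in $\delta(\iota^*\sigma)$ is precisely $+\iota^*(\H_\eta\sigma)$ in the conventions of the earlier sections, and that the scalar in $[\Delta,m(e^{-\eta/\hbar})]=-\hbar^{-1}m(e^{-\eta/\hbar})\H_\eta$ is exactly $-\hbar^{-1}$; the parities of the objects involved (notably the even parity of $\eta$ and of $e^{-\eta/\hbar}$) enter here. One must also take care, in the reduction, that the cover can be chosen so that every member of the family is in graph form over the relevant part of $\Delta^k$ — this is where the properness hypothesis is used — and that the localisation of $\int_L$ and the passage of $\delta$ under it are legitimate.
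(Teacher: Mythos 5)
Your proof is correct and follows essentially the same route as the paper's: reduce to a Darboux chart in which the family is a family of graphs $\xi_a=\xi_a(\x,t)$, use the defining relation $\p\xi_a/\p t^i=\p\eta_i/\p\x^a$ to identify the chain-rule term, and absorb it together with $e^{-\eta/\hbar}$ into $\hbar\Delta$ plus an exact integral form on $L$ that integrates to zero. Your repackaging of the middle step through the operator identity $[\Delta,m(e^{-\eta/\hbar})]=-\hbar^{-1}m(e^{-\eta/\hbar})\H_\eta$ and the vanishing of $\int_L\iota^*\Delta\tau$ is just a rearrangement of the paper's final total-divergence line, not a different argument.
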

\begin{proof}
  Let $x_0$ be a point in $L$, and consider a Darboux coordinate chart
  $(x^a,\xi_a)$ around $\iota(x_0,0)\in M$. If $\sigma = f \, dx$, we
  have
  \begin{equation*}
    (\iota^*\sigma)(\x,t) = f(x^a(\x,t),\xi_a(\x,t),t,dt) \,
    \Ber\Biggl( \frac{\p x^a(\x,t)}{\p\x^b} \Biggr) \, d\x .
  \end{equation*}
  We may assume that $x^a(\x,t)=\x^a$, in which case we have
  \begin{equation*}
    (\iota^*\sigma)(\x,t) = f(\x^a,\xi_a(\x,t),t,dt) \, d\x .
  \end{equation*}
  Applying the differential $\delta$ and multiplying by the
  inhomogeneous differential form $e^{-\eta/\hbar}$, we obtain
  \begin{multline*}
    \delta (e^{-\eta/\hbar} \, \iota^*\sigma)(\x,t) =
    e^{-\eta(\x,t)/\hbar} \left( \iota^*\delta\sigma + \sum_{a\in I}
      \delta\xi_a(\x,t) \frac{\p
        f(\x^a,\xi_a(\x,t),t,dt)}{\p\xi_a} \, d\x \right) \\
    \begin{aligned}
      &= e^{-\eta(\x,t)/\hbar} \left( \iota^*\delta\sigma + \sum_{a\in
          I} \frac{\p\eta(\x,t)}{\p\x^a} \frac{\p
          f(\x^a,\xi_a(\x,t))}{\p\xi_a} \, d\x \right) \\
      &= e^{-\eta(\x,t)/\hbar} \, \iota^*(\delta+\hbar\Delta)\sigma -
      \hbar \sum_{a\in I} (-1)^{p_a} \frac{\p}{\p\x^a} \left(
        e^{-\eta(\x,t)/\hbar} \, \frac{\p
          f(\x^a,\xi_a(\x,t))}{\p\xi_a} \right) d\x .
    \end{aligned}
  \end{multline*}
  Integrating over $L$, we obtain the result.
\end{proof}

\section{Lagrangian submanifolds of simplicial odd symplectic
  supermanifolds}

\label{family}

In this section, we explain what we mean by a Lagrangian submanifold
of a simplicial odd symplectic supermanifold $M_\bull$: this is our
formulation in the Batalin--Vilkovisky setting of a global gauge
condition pieced together from local gauge conditions.

In order to define a Lagrangian submanifold in this generalized sense,
we start with a family of graded supermanifolds $\{L_k\}_{k\ge0}$
indexed by the natural numbers. For each function $\mu:[k]\to[\ell]$
preserving the order of the vertices, we are given a morphism of
graded supermanifolds
\begin{equation}
  \label{flexible}
  \begin{tikzcd}
    L_\ell \times \Delta^k \ar[dr] \ar[rr,"\mu^*"] & & L_k \times
    \Delta^k \ar[dl] \\
    & \Delta^k &
  \end{tikzcd}
\end{equation}
such that if $\mu:[k]\to[\ell]$ and $\nu:[j]\to[k]$ are a pair of
functions preserving the order of the vertices, the following diagram
commutes:
\begin{equation*}
  \begin{tikzcd}
    L_\ell \times \Delta^j \ar[dr,"(\mu\nu)^*"']
    \ar[rr,"\mu^*\times_{\Delta^k}\Delta^j"]
    & & L_k \times \Delta^j \ar[dl,"\nu^*"] \\
    & L_j \times \Delta^j &
  \end{tikzcd}
\end{equation*}
The fibred product $\mu^*\times_{\Delta^k}\Delta^j$ is taken with
respect to the morphism of simplices $\nu_*:\Delta^j\to\Delta^k$.

The other data needed to specify a Lagrangian submanifold of $M_\bull$
are morphisms $\iota_k:L_k\times\Delta^k\to M_k$ satisfying the
following conditions:
\begin{enumerate}[a)]
\item for every point $\t\in\Delta^k$, the restriction of $\iota_k$ to
  $L_k\times\t$ is a proper Lagrangian embedding;
\item for each morphism $f:[k]\to[\ell]$ in the simplicial category,
  the following diagram commutes:
  \begin{equation*}
    \begin{tikzcd}[column sep=huge]
      L_\ell\times\Delta^k \ar[d,"\mu^*"']
      \ar[r,"L_\ell\times\mu_*"] & L_\ell\times\Delta^\ell
      \ar[r,"\iota_\ell"] & M_\ell \ar[d,"\mu^*"] \\
      L_k\times\Delta^k \ar[rr,"\iota_k"'] & & M_k
    \end{tikzcd}
  \end{equation*}
\end{enumerate}
We only consider the case in which $M_\bull=\CU_\bull$ is the \v{C}ech
nerve associated to an open cover $\CU=\{U_\alpha\}$ of an odd
symplectic supermanifold $M$. In this case, $L_k$ decomposes into a
disjoint union
\begin{equation*}
  L_k = \coprod_{\alpha_0\dots\alpha_k} L_{\alpha_0\dots\alpha_k} ,
\end{equation*}
and $\iota_k$ decomposes into families of proper Lagrangian embeddings
\begin{equation*}
  \iota_{\alpha_0\dots\alpha_k} : L_{\alpha_0\dots\alpha_k} \times
  \Delta^k \to \CU_{\alpha_0\dots\alpha_k} .
\end{equation*}
Denote by
$\eta_{\alpha_0\dots\alpha_k} \in
\CO(L_{\alpha_0\dots\alpha_k})\ho\Om_k \subset
\Om^1(L_{\alpha_0\dots\alpha_k}\times\Delta^k)$ the one-form
associated to the family of Lagrangian submanifolds determined by
$\iota_k$.

The main result of this article is the definition of a linear form
\begin{equation*}
  \T : \Tot\Om^{1/2}(\CU_\bull) \to \C
\end{equation*}
of degree $0$, which is closed in the sense that
\begin{equation*}
  \T\bigl( (\delta+\hbar\Delta)\sigma \bigr) = 0 .
\end{equation*}
The formula for $\T$ specializes, in the case of a Lagrangian
submanifold $L\subset M$, to the formula of Batalin and Vilkovisky,
\begin{equation*}
  \T(\sigma) = \int_L \iota^*\sigma .
\end{equation*}
They interpret $\T(\sigma)$ as the partition function for a quantum
field theory, and $L$ as a gauge condition. Our extension of their
formula allows the use of more general gauge conditions, and opens the
door to the use of the Batalin--Vilkovisky formalism when the action
contains topological terms.

The definition of $\T$ depends on the auxilliary data of a partition of
unity $\{\varphi_\alpha\}$ for the cover $\CU$. In other words,
$\varphi_\alpha\in\CO_c(U_\alpha)$, and
\begin{equation}
  \label{partition}
  \sum_\alpha \varphi_\alpha = 1 .
\end{equation}
Denote the commutator $[\Delta,m(\varphi_\alpha)]=\H_{\varphi_\alpha}$
by $\H_\alpha$. Since $[\Delta,m(1)]=0$, we see that
\begin{equation}
  \label{partitiond}
  \sum_\alpha \H_\alpha = 0 .
\end{equation}
From the partition of unity $\{\varphi_\alpha\}$, we define a series of
differential operators acting on
$\Om^{1/2}(U_{\alpha_0\dots\alpha_k})$:
\begin{equation*}
  \Phi_{\alpha_0\dots\alpha_k} = \frac{\hbar^k}{k+1} \sum_{i=0}^k (-1)^i
  \H_{\alpha_0} \dots \H_{\alpha_{i-1}} m(\varphi_{\alpha_i})
  \H_{\alpha_{i+1}} \dots \H_{\alpha_k} .
\end{equation*}

\begin{lemma}
  \label{eta}
  \begin{equation*}
    [\delta+\hbar\Delta,\Phi_{\alpha_0\dots\alpha_k}]
    = \sum_{i=0}^{k+1} (-1)^i \sum_\alpha
    \Phi_{\alpha_0\dots\alpha_{i-1}\alpha\alpha_i\dots\alpha_k}
  \end{equation*}
\end{lemma}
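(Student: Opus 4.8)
The plan is to prove the identity by collapsing each side independently to the single operator $\hbar^{k+1}\,\H_{\alpha_0}\cdots\H_{\alpha_k}$. The only inputs needed are the four facts $[\Delta,m(\varphi_\alpha)]=\H_\alpha$, $[\Delta,\H_\alpha]=[\Delta,[\Delta,m(\varphi_\alpha)]]=\half[[\Delta,\Delta],m(\varphi_\alpha)]=0$ (using $\Delta^2=0$), $\sum_\alpha\H_\alpha=0$, and $\sum_\alpha m(\varphi_\alpha)=m(1)=\Id$, together with the observation that all the operators $m(\varphi_\alpha)$ and $\H_\alpha$ (extended by zero outside their supports) act only in the $M$-directions of $\Om^{1/2}(U_{\alpha_0\dots\alpha_k})\ho\Om_k$, while $\delta$ differentiates in the simplicial factor $\Om_k$. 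Since these act on complementary tensor factors, $[\delta,\Phi_{\alpha_0\dots\alpha_k}]=0$, so it suffices to treat $[\hbar\Delta,\Phi_{\alpha_0\dots\alpha_k}]$.

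For the left-hand side I would apply the odd graded derivation $[\Delta,-]$ summand by summand. The $i$-th summand of $\Phi_{\alpha_0\dots\alpha_k}$ is a product of the $k$ odd operators $\H_{\alpha_j}$, $j\ne i$, and the single even operator $m(\varphi_{\alpha_i})$, in order; by the graded Leibniz rule every term that lands on an $\H$-factor vanishes, and the term landing on $m(\varphi_{\alpha_i})$ replaces it by $\H_{\alpha_i}$ and picks up the Koszul sign $(-1)^i$ from moving $\Delta$ past $\H_{\alpha_0},\dots,\H_{\alpha_{i-1}}$. The product becomes $(-1)^i\,\H_{\alpha_0}\cdots\H_{\alpha_k}$, which against the explicit $(-1)^i$ in the definition is $i$-independent; summing the $k+1$ equal summands and restoring the prefactor gives
\begin{equation*}
  \hbar\,[\Delta,\Phi_{\alpha_0\dots\alpha_k}] = \hbar^{k+1}\,\H_{\alpha_0}\cdots\H_{\alpha_k} .
\end{equation*}

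For the right-hand side I would examine the operator in the inserted slot of $\Phi_{\alpha_0\dots\alpha_{i-1}\alpha\alpha_i\dots\alpha_k}$. Among the $k+2$ summands of this $(k+2)$-index $\Phi$, that slot holds $m(\varphi_\alpha)$ in exactly the summand whose distinguished index is the insertion position $i$, and $\H_\alpha$ in all the others. Summing over $\alpha$, the $\H_\alpha$-terms die by $\sum_\alpha\H_\alpha=0$, and the surviving term, via $\sum_\alpha m(\varphi_\alpha)=\Id$, contributes $(-1)^i\,\H_{\alpha_0}\cdots\H_{\alpha_k}$; hence $\sum_\alpha\Phi_{\alpha_0\dots\alpha_{i-1}\alpha\alpha_i\dots\alpha_k}=\tfrac{\hbar^{k+1}}{k+2}(-1)^i\,\H_{\alpha_0}\cdots\H_{\alpha_k}$. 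The outer alternating sum $\sum_{i=0}^{k+1}(-1)^i$ cancels this $(-1)^i$, and the $k+2$ equal terms reproduce $\hbar^{k+1}\,\H_{\alpha_0}\cdots\H_{\alpha_k}$, which matches the left-hand side.

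The mathematical content of the lemma lies entirely in the signs, so the step requiring real care is the Koszul sign $(-1)^i$ generated when $\Delta$ is commuted past the odd operators $\H_{\alpha_0},\dots,\H_{\alpha_{i-1}}$: the alternating sign built into $\Phi_{\alpha_0\dots\alpha_k}$ is chosen precisely to cancel it, in direct imitation of Weil's proof of the de Rham theorem. The auxiliary facts $[\Delta,\H_\alpha]=0$ and the commutation of $\delta$ with the $M$-direction operators are routine but should be stated, since without them neither of the two collapses is legitimate.
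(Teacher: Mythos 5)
Your proposal is correct and follows essentially the same route as the paper's proof: both sides are collapsed to $\hbar^{k+1}\,\H_{\alpha_0}\cdots\H_{\alpha_k}$, the right-hand side by isolating the one summand in which the inserted slot carries $m(\varphi_\alpha)$ (surviving via $\sum_\alpha\varphi_\alpha=1$) while the summands carrying $\H_\alpha$ in that slot die by $\sum_\alpha\H_\alpha=0$. Your treatment of the left-hand side is in fact slightly more detailed than the paper's, which simply asserts $[\delta+\hbar\Delta,\Phi_{\alpha_0\dots\alpha_k}]=\hbar^{k+1}\H_{\alpha_0}\cdots\H_{\alpha_k}$ without spelling out the Leibniz-rule and Koszul-sign bookkeeping you supply.
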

\begin{proof}
  We have
  \begin{equation*}
    [\delta+\hbar\Delta,\Phi_{\alpha_0\dots\alpha_k}] = \hbar^{k+1} \,
    \H_{\alpha_0} \dots \H_{\alpha_k} .
  \end{equation*}
  On the other hand, we have
  \begin{multline*}
    (-1)^i \sum_\alpha
    \Phi_{\alpha_0\dots\alpha_{i-1}\alpha\alpha_i\dots\alpha_k} \\
    \begin{aligned}
      &= \frac{\hbar^{k+1}}{k+2} \sum_{j=0}^{i-1} (-1)^{j+i}
      \sum_\alpha \H_{\alpha_0} \dots m( \varphi_{\alpha_j} ) \dots
      \H_{\alpha_{i-1}} \H_\alpha \H_{\alpha_i} \dots \H_{\alpha_k} \\
      &+ \frac{\hbar^{k+1}}{k+2} \sum_\alpha \H_{\alpha_0} \dots
      \H_{\alpha_{i-1}} m( \varphi_\alpha ) \H_{\alpha_i} \dots \H_{\alpha_k} \\
      &+ \frac{\hbar^{k+1}}{k+2} \sum_{j=i}^k (-1)^{j+i+1} \sum_\alpha
      \H_{\alpha_0} \dots \H_{\alpha_{i-1}} \H_\alpha \H_{\alpha_i}
      \dots m( \varphi_{\alpha_j} ) \dots \H_{\alpha_k} .
    \end{aligned}
  \end{multline*}
  and by \eqref{partition} and \eqref{partitiond}, this equals
  \begin{equation*}
    \frac{\hbar^{k+1}}{k+2} \, \H_{\alpha_0} \dots \H_{\alpha_k} .
  \end{equation*}
  Summing over $i$, the lemma follows.
\end{proof}

We now come to the main result of this paper.
\begin{theorem}
  \label{T}
  Define a linear form $\T$ on $\Tot\Om^{1/2}(M_\bull)$ by the formula
  \begin{equation*}
    \T(\sigma_\bull) = \sum_{k=0}^\infty (-1)^k
    \sum_{\alpha_0\dots\alpha_k} \int_{\Delta^k}
    \int_{L_{\alpha_0\dots\alpha_k}}
    e^{-\eta_{\alpha_0\dots\alpha_k}/\hbar} \,
    \iota_{\alpha_0\dots\alpha_k}^* \bigl( \Phi_{\alpha_0\dots\alpha_k}
    \sigma_{\alpha_0\dots\alpha_k} \bigr) .
  \end{equation*}
  Then $\T$ is closed: $\T\bigl( (\delta+\hbar\Delta)\sigma_\bull \bigr) = 0$.
\end{theorem}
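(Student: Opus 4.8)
The plan follows the pattern of Weil's proof that fibre integration over the Čech--de~Rham bicomplex is a chain map; the three tools are the Mikhailov--Schwarz homotopy formula (Theorem~\ref{MS}), Lemma~\ref{eta}, and Stokes' theorem on the simplices $\Delta^k$. Substitute $(\delta+\hbar\Delta)\sigma$ into the formula defining $\T$. Since $\delta+\hbar\Delta$ acts componentwise, each summand contains $\Phi_{\alpha_0\dots\alpha_k}(\delta+\hbar\Delta)\sigma_{\alpha_0\dots\alpha_k}$, and because $\Phi_{\alpha_0\dots\alpha_k}$ has ghost degree $k$ the graded Leibniz rule rewrites this as
\[
(-1)^k\Bigl((\delta+\hbar\Delta)\bigl(\Phi_{\alpha_0\dots\alpha_k}\sigma_{\alpha_0\dots\alpha_k}\bigr)-[\delta+\hbar\Delta,\Phi_{\alpha_0\dots\alpha_k}]\sigma_{\alpha_0\dots\alpha_k}\Bigr),
\]
the sign $(-1)^k$ cancelling the one in the definition of $\T$. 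Thus $\T\bigl((\delta+\hbar\Delta)\sigma\bigr)=\sum_k(\mathcal{A}_k-\mathcal{B}_k)$, where $\mathcal{A}_k$ gathers the terms built from $(\delta+\hbar\Delta)(\Phi_{\alpha_0\dots\alpha_k}\sigma_{\alpha_0\dots\alpha_k})$ and $\mathcal{B}_k$ the terms built from the commutator.

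For $\mathcal{A}_k$ I first observe that $\Phi_{\alpha_0\dots\alpha_k}\sigma_{\alpha_0\dots\alpha_k}$ has compact support: each operator $\H_\alpha=[\Delta,m(\varphi_\alpha)]$ and each $m(\varphi_\alpha)$ confines support to $\operatorname{supp}\varphi_\alpha$, so the support lies in $U_{\alpha_0\dots\alpha_k}$ and is compact. Applying Theorem~\ref{MS} with $\Phi_{\alpha_0\dots\alpha_k}\sigma_{\alpha_0\dots\alpha_k}$ in the role of the half-form pulls the $\delta$ outside the integral over $L_{\alpha_0\dots\alpha_k}$; Stokes' theorem on $\Delta^k$ then rewrites $\int_{\Delta^k}\delta$ as $\sum_{i=0}^k(-1)^i\int_{\Delta^{k-1}}(d^i)^*$, a sum over the cofaces $d^i\colon\Delta^{k-1}\to\Delta^k$ (using that fibre integration over $L_{\alpha_0\dots\alpha_k}$ commutes with pullback in the simplex direction). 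For $\mathcal{B}_k$ I apply Lemma~\ref{eta}, replacing the commutator by $\sum_{i=0}^{k+1}(-1)^i\sum_\alpha\Phi_{\alpha_0\dots\alpha_{i-1}\alpha\alpha_i\dots\alpha_k}$.

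The heart of the proof is the identity $\mathcal{A}_k=\mathcal{B}_{k-1}$. In the $i$-th coface term of $\mathcal{A}_k$, condition~b) of Section~\ref{family} identifies the restriction of the family $\iota_{\alpha_0\dots\alpha_k}$ to the $i$-th face of $\Delta^k$ with $\iota_{d_i(\alpha_0\dots\alpha_k)}$ precomposed with the structure map $L_{\alpha_0\dots\alpha_k}\to L_{d_i(\alpha_0\dots\alpha_k)}$; the compatibility condition defining $\Tot\Om^{1/2}(M_\bull)$ identifies $(d^i)^*\sigma_{\alpha_0\dots\alpha_k}$ with the restriction of $\sigma_{d_i(\alpha_0\dots\alpha_k)}$; and $(d^i)^*\eta_{\alpha_0\dots\alpha_k}=\eta_{d_i(\alpha_0\dots\alpha_k)}$ by the uniqueness clause characterizing $\eta$. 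The structure map $L_{\alpha_0\dots\alpha_k}\to L_{d_i(\alpha_0\dots\alpha_k)}$ is only an open embedding, but the factor $m(\varphi_{\alpha_i})$ in $\Phi_{\alpha_0\dots\alpha_k}$ forces the integrand into its image, so the change of variables is lossless. Reindexing by $\beta_0\dots\beta_{k-1}=d_i(\alpha_0\dots\alpha_k)$ and $\alpha=\alpha_i$ yields
\[
\mathcal{A}_k=\sum_{\beta_0\dots\beta_{k-1}}\sum_{i=0}^{k}(-1)^i\sum_\alpha\int_{\Delta^{k-1}}\int_{L_{\beta_0\dots\beta_{k-1}}}e^{-\eta_{\beta_0\dots\beta_{k-1}}/\hbar}\,\iota_{\beta_0\dots\beta_{k-1}}^*\bigl(\Phi_{\beta_0\dots\beta_{i-1}\alpha\beta_i\dots\beta_{k-1}}\,\sigma_{\beta_0\dots\beta_{k-1}}\bigr),
\]
which, after expanding the commutator in $\mathcal{B}_{k-1}$ by Lemma~\ref{eta}, is precisely $\mathcal{B}_{k-1}$. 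Since $\mathcal{A}_0=0$ (the boundary of $\Delta^0$ is empty), the series $\sum_k\mathcal{A}_k=\sum_k\mathcal{B}_{k-1}$ telescopes against $\sum_k\mathcal{B}_k$, giving $\T\bigl((\delta+\hbar\Delta)\sigma\bigr)=0$.

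I expect the main obstacle to be exactly this matching step: keeping the three compatibility identities --- for the families $\iota$, for the half-forms $\sigma$, and for the one-forms $\eta$ --- consistent with one another and with the simplicial signs, and in particular noticing that the support property of $m(\varphi_{\alpha_i})$ is what legitimizes the change of variables along the non-surjective structure maps among the $L$'s. The remaining points --- the graded signs in the Leibniz and Stokes steps, and convergence of the sums for the locally finite cover $\CU$ --- are routine.
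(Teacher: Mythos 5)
Your proposal is correct and follows essentially the same route as the paper's proof: commute $\delta+\hbar\Delta$ past $\Phi_{\alpha_0\dots\alpha_k}$, handle the commutator term with Lemma~\ref{eta}, and handle the total-derivative term with Theorem~\ref{MS} plus Stokes's theorem on $\Delta^k$, the resulting face terms cancelling against the commutator terms one degree down. The extra detail you supply on the compatibility of $\iota$, $\sigma$ and $\eta$ under the face maps, and on the role of the support of $\varphi_{\alpha_i}$, is a correct elaboration of what the paper leaves implicit.
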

\begin{proof}
  By Lemma~\ref{eta}, we have
  \begin{multline*}
    \T\bigl( (\delta+\hbar\Delta) \sigma_\bull \bigr)
    \\
      \begin{aligned}
        &= \sum_{k=0}^\infty (-1)^k \sum_{\alpha_0\dots\alpha_k}
        \int_{\Delta^k} \int_{L_{\alpha_0\dots\alpha_k}}
        e^{-\eta_{\alpha_0\dots\alpha_k}/\hbar} \,
        \iota_{\alpha_0\dots\alpha_k}^* \bigl(
        \varphi_{\alpha_0\dots\alpha_k} (\delta+\hbar\Delta)
        \sigma_{\alpha_0\dots\alpha_k} \bigr) \\
        &= \sum_{k=0}^\infty \sum_{\alpha_0\dots\alpha_k} \biggl(
        \int_{\Delta^k} \int_{L_{\alpha_0\dots\alpha_k}}
        e^{-\eta_{\alpha_0\dots\alpha_k}/\hbar} \,
        \iota_{\alpha_0\dots\alpha_k}^* \bigl( (\delta+\hbar\Delta)
        \bigl( \varphi_{\alpha_0\dots\alpha_k}
        \sigma_{\alpha_0\dots\alpha_k}
        \bigr) \bigr) \\
        &- \sum_{i=0}^{k+1} (-1)^i \int_{\Delta^k}
        \int_{L_{\alpha_0\dots\alpha_k}}
        e^{-\eta_{\alpha_0\dots\alpha_k}/\hbar} \,
        \iota_{\alpha_0\dots\alpha_k}^* \bigl(
        \varphi_{\alpha_0\dots\alpha_{i-1}\alpha\alpha_i\dots\alpha_k}
        \sigma_{\alpha_0\dots\alpha_k} \bigr) \biggr) .
    \end{aligned}
  \end{multline*}
  By Theorem~\ref{MS} and Stokes's Theorem, the first sum equals
  \begin{multline*}
    \sum_{k=0}^\infty \sum_{\alpha_0\dots\alpha_k} \int_{\Delta^k}
    \delta \int_{L_{\alpha_0\dots\alpha_k}}
    e^{-\eta_{\alpha_0\dots\alpha_k}/\hbar} \,
    \iota_{\alpha_0\dots\alpha_k}^* \bigl(
    \varphi_{\alpha_0\dots\alpha_k} \sigma_{\alpha_0\dots\alpha_k} \bigr)
    \\
    \shoveleft{
      = \sum_{k=0}^\infty \sum_{i=0}^k (-1)^i
      \sum_{\alpha_0\dots\alpha_k} \int_{\Delta^{k-1}}
    } \\
    \int_{L_{\alpha_0\dots\widehat{\alpha}_i\dots\alpha_k}}
    e^{-\eta_{\alpha_0\dots\widehat{\alpha}_i\dots\alpha_k}/\hbar} \,
    \iota_{\alpha_0\dots\widehat{\alpha}_i\dots\alpha_k}^* \bigl(
    \varphi_{\alpha_0\dots\alpha_k}
    \sigma_{\alpha_0\dots\widehat{\alpha}_i\dots\alpha_k} \bigr) .
  \end{multline*}
  The result follows.
\end{proof}

We may generalize the above construction to families of flexible
Lagrangian submanifolds. In \eqref{flexible}, replace the simplex
$\Delta^k$ by the product $\Delta^k\times\Delta^n$, and allow the
partition of unity $\varphi_\alpha$ to depend on the projection to
$\Delta^n$. Let $\d$ be the de~Rham differential on $\Delta^n$, and
generalize the differential operator $\Phi_{\alpha_0\dots\alpha_k}$ to
have coefficients in $\Om_n$, the differential graded algebra of
differential forms on the auxilliary simplex:
\begin{multline}
  \label{ell}
  \Phi_{\alpha_0\dots\alpha_k} = \frac{1}{k+1} \sum_{i=0}^k (-1)^i
  \bigl( m( \d\varphi_{\alpha_0} ) + \hbar\H_{\alpha_0} ) \dots
  \bigl( m( \d\varphi_{\alpha_{i-1}} ) + \hbar\H_{\alpha_{i-1}} ) \\
  m( \varphi_{\alpha_i} ) \bigl(
  m( \d\varphi_{\alpha_{i+1}} ) + \hbar\H_{\alpha_{i+1}} ) \dots \bigl(
  m( \d\varphi_{\alpha_k} ) + \hbar\H_{\alpha_k} ) .
\end{multline}
With no change in its proof, Lemma \ref{eta} now takes the more
general form
\begin{equation*}
  [\d+\delta+\hbar\Delta,\Phi_{\alpha_0\dots\alpha_k}]
  = \sum_{i=0}^{k+1} (-1)^i \sum_\alpha
  \Phi_{\alpha_0\dots\alpha_{i-1}\alpha\alpha_i\dots\alpha_k} .
\end{equation*}
We may now define a trace $\T$ with values in $\Om_n$, by essentially
the same formula as before, replacing integration over $\Delta^k$ by
integration along the fibers of the projection
$\Delta^k\times\Delta^n\to\Delta^n$. Again with no change in the
proof, Theorem~\ref{T} becomes the following:
\begin{equation*}
  \T\bigl( (\d+\delta+\hbar\Delta)\sigma_\bull \bigr) + \d\T\bigl(
  \sigma_\bull \bigr) = 0
\end{equation*}

An observable in the Batalin--Vilkovisky formalism is a bosonic
half-form $\sigma_\bull$ of ghost number $0$ that is a cocycle in the
Thom-Whitney complex:
\begin{equation*}
  (\d+\delta+\hbar\Delta)\sigma_n=0 .
\end{equation*}
The following is an immediate consequence of this parametrized
generalization of Theorem~\ref{T}.
\begin{corollary}
  If $\sigma_\bull$ is an observable in the Batalin--Vilkovisky
  formalism, then $\T(\sigma_\bull)$ is not changed by an isotopy of
  flexible Lagrangians or a change in the partition of unity used in
  the definition of $\T$.
\end{corollary}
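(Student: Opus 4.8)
The plan is to derive this from the version of Theorem~\ref{T} with coefficients in $\Om_\ell$, taking the auxiliary simplex to be $\Delta^1$. Write $\T^\varphi$ for the linear form of Theorem~\ref{T} built from a partition of unity $\varphi=\{\varphi_\alpha\}$. Given two partitions of unity $\varphi^0=\{\varphi^0_\alpha\}$ and $\varphi^1=\{\varphi^1_\alpha\}$ subordinate to $\CU$, I would interpolate linearly: identify $\Delta^1$ with $[0,1]$ so that $t=0$ and $t=1$ are its two vertices, and set
\begin{equation*}
  \varphi^t_\alpha = (1-t)\,\varphi^0_\alpha + t\,\varphi^1_\alpha .
\end{equation*}
Each $\varphi^t_\alpha$ lies in $\CO_c(U_\alpha)$, with support in a compact subset of $U_\alpha$ independent of $t$ (namely the union of the supports of $\varphi^0_\alpha$ and $\varphi^1_\alpha$); it depends affinely on $t\in\Delta^1$; and $\sum_\alpha\varphi^t_\alpha=1$ for all $t$. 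Thus $\{\varphi^t_\alpha\}$ is a partition of unity for $\CU$ parametrized by $\Delta^1$, and \eqref{ell} produces a linear form $\T$ on $\Tot\Om^{1/2}(\CU_\bull)$ taking values in the de~Rham complex $\Om_1$ of $\Delta^1$.

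Next I would feed $\sigma_\bull$ into the parametrized form of Theorem~\ref{T}. Since $\sigma_\bull$ does not involve the auxiliary simplex, $\d\sigma_\bull=0$, and the hypothesis $(\delta+\hbar\Delta)\sigma_\bull=0$ upgrades to $(\d+\delta+\hbar\Delta)\sigma_\bull=0$; the parametrized identity then collapses to $\d\T(\sigma_\bull)=0$. Write $\T(\sigma_\bull)=F+G$, where $F$ is the component of form-degree $0$ (a function on $\Delta^1$) and $G$ the component of form-degree $1$. As $\Delta^1$ is one-dimensional, $\d\T(\sigma_\bull)=\d F$, so $\d F=0$ and $F$ is constant on $\Delta^1$; equivalently, by Stokes's theorem, $F|_{t=1}-F|_{t=0}=\int_{\Delta^1}\d\T(\sigma_\bull)=0$.

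It remains to identify $F|_{t=0}$ and $F|_{t=1}$ with $\T^{\varphi^0}(\sigma_\bull)$ and $\T^{\varphi^1}(\sigma_\bull)$. Pulling back the parametrized data along the inclusion of a vertex of $\Delta^1$ annihilates every one-form $\d\varphi_\alpha$ and returns the fixed partition of unity attached to that vertex, so at $t=0$ the operator $\Phi_{\alpha_0\dots\alpha_k}$ of \eqref{ell} reduces to $\frac{\hbar^k}{k+1}\sum_{i=0}^k(-1)^i\H_{\alpha_0}\dots m(\varphi^0_{\alpha_i})\dots\H_{\alpha_k}$ (with $\H_{\alpha_j}=[\Delta,m(\varphi^0_{\alpha_j})]$) — precisely the operator defining $\T^{\varphi^0}$ — and likewise at $t=1$. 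Since the one-forms $\eta_{\alpha_0\dots\alpha_k}$ and the embeddings $\iota_{\alpha_0\dots\alpha_k}$ are independent of the partition of unity, and restriction to a vertex of $\Delta^1$ commutes with the integrations over $L_{\alpha_0\dots\alpha_k}$ and over $\Delta^k$ occurring in $\T$, it follows that $F|_{t=0}=\T^{\varphi^0}(\sigma_\bull)$ and $F|_{t=1}=\T^{\varphi^1}(\sigma_\bull)$. Combined with the previous paragraph, this gives $\T^{\varphi^0}(\sigma_\bull)=\T^{\varphi^1}(\sigma_\bull)$.

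Granting the parametrized versions of Lemma~\ref{eta} and Theorem~\ref{T}, every step here is routine bookkeeping; the one point that genuinely needs care is the last verification — that evaluation at a vertex of $\Delta^1$ intertwines the parametrized and unparametrized definitions of $\T$, the auxiliary forms $\d\varphi_\alpha$ dropping out under restriction to a point. That is where I would expect the main effort to lie.
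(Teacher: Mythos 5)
Your argument is exactly the one the paper intends: the corollary is stated as an immediate consequence of the parametrized version of Theorem~\ref{T}, obtained by linearly interpolating the two partitions of unity over an auxiliary $\Delta^1$ and using $\d\T(\sigma_\bull)=0$ to conclude that the degree-zero component is constant, with the vertex restrictions recovering the unparametrized $\T^{\varphi^0}$ and $\T^{\varphi^1}$ since the forms $\d\varphi_\alpha$ vanish there. Your write-up simply makes explicit the bookkeeping the paper leaves to the reader.
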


\section{Application to the superparticle}

In this section, we construct a flexible Lagrangian submanifold that
imposes the light-cone gauge for the superparticle. The superparticle
is a supersymmetric analogue of the relativistic particle in
ten-dimensional spacetime $\R^{1,9}$. Recall the action of the
relatavistic particle. (See \cite{superparticle} for further details.)

Let $\{E_\mu\}_{0\le\mu\le9}$ be a basis for $\R^{1,9}$, with inner
product
\begin{equation*}
  ( E_\mu , E_\nu ) =
  \begin{cases}
    \delta_{\mu\nu} , & \mu+\nu>0 , \\
    -1 , & \mu=\nu=0 ,
  \end{cases}
\end{equation*}
with dual basis $\{E^\mu\}_{0\le\mu\le9}$. Let
$\eta^{\mu\nu}=(E^\mu,E^\nu)$.

The world-line is an oriented parametrized one-dimensional manifold,
and fields are differential forms on this manifold. Denote
differentiation along the world-line by $\p$.

The physical fields in this model (the fields of ghost number $0$) are
the position $x^\mu$ and momentum $p_\mu$, which are world-line
scalars taking values in $\R^{1,9}$ and its dual $(\R^{1,9})^\vee$,
and the \emph{einbein} (or gravitational field) $e$, which is a
nowhere-vanishing world-line one-form. The antifields $x^+_\mu$ and
$p^{+\mu}$ are world-line one-forms taking values in $(\R^{1,9})^\vee$
and $\R^{1,9}$ respectively, and the antifield $e^+$ is a world-line
scalar; all have ghost number $-1$ and odd total parity.

There is also a ghost field $c$, associated to parametrization of the
world-line. This field is a world-line scalar of ghost number $1$ and
odd total parity; its antifield $c^+$ is a world-line one-form of
ghost number $-2$ and even total parity.

The Batalin--Vilkovisky action of the relativistic particle is
\begin{equation}
  \label{S}
  S_0 = \tint \left( p_\mu \p x^\mu - \half e (p,p) + \bigl( \p e^+ -
    (x^+,p) \bigr) c \right) \, dt .
\end{equation}

The simply-connected cover $\Spin(1,9)$ of the Lorentz group
$\SO(1,9)$ has a pair of 16-dimensional real representations, the
left- and right-handed Majorana--Weyl spinors $\Ss_\pm$. Denote the
Clifford action of the standard basis of $\R^{1,9}$ by $\gamma^\mu$,
so that
\begin{equation*}
  \gamma^\mu \gamma^\nu + \gamma^\nu \gamma^\mu = 2\eta^{\mu\nu} .
\end{equation*}
The action of the Lie algebra $\so(1,9)$ on $\Ss_\pm$ is realized by
the elements of the Clifford algebra
\begin{equation*}
  \gamma^{\mu\nu} = \half \bigl( \gamma^\mu \gamma^\nu - \gamma^\nu
  \gamma^\mu \bigr) .
\end{equation*}

Denote the non-degenerate pairing between $\Ss_\pm$ and $\Ss_\mp$ by
$\TT(\alpha,\beta)$, and let
\begin{align*}
  \TT^\mu(\alpha,\beta)
  &= \TT(\gamma^\mu\alpha,\beta) = \TT(\alpha,\gamma^\mu\beta)
    \intertext{and}
    \TT^{\mu\nu}(\alpha,\beta)
  &= \TT(\gamma^{\mu\nu}\alpha,\beta) = -\TT(\alpha,\gamma^{\mu\nu}\beta) .
\end{align*}

The superparticle has, in addition to the field content of the
relativistic particle, a sequence $\theta_n$ of world-line scalar
fields of ghost number $n$, which for even $n$ take values in $\Ss_+$
and have odd total parity, and for odd $n$ take values in $\Ss_-$ and
have even total parity. For each $n$ there is the corresponding
antifield $\theta_n^+$, of ghost number $-1-n$, which is a world-line
one-form that for even $n$ takes values in $\Ss_-$ and has odd total
parity, and for odd $n$ takes values in $\Ss_-$ and has even total
parity. Let $N$ be the graded supermanifold with coordinates
$\{x^\mu,p_\mu,e,c,\theta_n\}$, and let $M=T^*[-1]N$ be its shifted
cotangent bundle, whose fibres have coordinates the corresponding
antifields.

The Batalin--Vilkovisky extension of the classical action of the
superparticle has the form $S=S_0+S_1$, where $S_0$ is the classical
action of the particle \eqref{S}, and $S_1$ depends only on the fields
and antifields
\begin{equation*}
  \{p_\mu,\theta_n\}\cup\{x^+_\mu,e^+,c^+,\theta^+_n\}
\end{equation*}
and their derivatives. The salient term
$-\half \int p_\mu \TT^\mu(\theta_0,\p\theta_0) \, dt$ of $S_1$ is the
dimensional reduction of the topological term in the Green--Schwarz
action for the superstring.

A light-like vector $\m$ is a non-zero vector such that
\begin{equation*}
  (\m,\m) = 0 .
\end{equation*}
The space of solutions of this equation has two components, the
forward and backward light-cones, depending on the sign of the
time-component $\m_0$. Fix such a light-like vector, for example
$\m=\half(E^0+E^9)$.

Let $\n$ be another light-like vector satisfying the equation
$(\m,\n)=1/2$, for example $\n=\half(E^0-E^9)$. Let $\cl(\m)$ be
Clifford multiplication by $\m$, given by contraction with the
$\gamma$-matrices,
\begin{equation*}
  \cl(\m) = \m_\mu\gamma^\mu : \Ss_\pm \to \Ss_\mp ,
\end{equation*}
and similarly for $\cl(\n)$. We have $\cl(\m)^2=\cl(\n)^2=0$, and
$\cl(\m)\cl(\n)+\cl(\n)\cl(\m)=1$.

The orthogonal complement to the plane spanned by the vectors
$\{\m,\n\}$ is the eight-dimensional Euclidean space spanned by
$\{E_1,\dots,E_8\}$. Associated to this subspace are a pair of spinor
representations $\ss_\pm$, also eight-dimensional, which we may
identify with solutions in $\Ss_\pm$ of the equation
\begin{equation*}
  \cl(\m)\theta_\pm = 0 ,
\end{equation*}
The map $\theta\mapsto\cl(\m)\cl(\n)\theta$ projects from $\Ss_\pm$ to
$\ss_\pm$.
\begin{lemma}
  \label{lightcone}
  If $\theta_\pm\in\ss_\pm$, then
  $p_\mu\TT^\mu(\theta_+,\theta_-) = 2 (p,\m) \, \n_\nu
  \TT^\nu(\theta_+,\theta_-)$.
\end{lemma}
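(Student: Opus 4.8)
The plan is to reduce the identity to a short manipulation inside the Clifford algebra, using only the two relations recorded above: $\cl(\m)\cl(\n)+\cl(\n)\cl(\m)=1$, and the contracted Clifford relation $\cl(p)\cl(\m)+\cl(\m)\cl(p)=2(p,\m)$ obtained from $\gamma^\mu\gamma^\nu+\gamma^\nu\gamma^\mu=2\eta^{\mu\nu}$ by contracting with $p_\mu\m_\nu$, together with the symmetry $\TT(\gamma^\mu\alpha,\beta)=\TT(\alpha,\gamma^\mu\beta)$. Writing $\cl(p)=p_\mu\gamma^\mu$, the left-hand side of the lemma is $p_\mu\TT^\mu(\theta_+,\theta_-)=\TT(\cl(p)\theta_+,\theta_-)$, and the right-hand side is $2(p,\m)\,\n_\nu\TT^\nu(\theta_+,\theta_-)=2(p,\m)\,\TT(\cl(\n)\theta_+,\theta_-)$, so it suffices to show $\TT\bigl((\cl(p)-2(p,\m)\cl(\n))\theta_+,\theta_-\bigr)=0$.

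The one idea needed is to insert the projector onto $\ss_+$. Since $\theta_+\in\ss_+$ we have $\cl(\m)\theta_+=0$, hence $\theta_+=(\cl(\m)\cl(\n)+\cl(\n)\cl(\m))\theta_+=\cl(\m)\cl(\n)\theta_+$; this is exactly the statement that $\theta\mapsto\cl(\m)\cl(\n)\theta$ restricts to $\Id$ on $\ss_+$. I would then commute $\cl(p)$ past the leading $\cl(\m)$ using the contracted Clifford relation:
\begin{equation*}
  \cl(p)\theta_+=\cl(p)\cl(\m)\cl(\n)\theta_+=2(p,\m)\,\cl(\n)\theta_+-\cl(\m)\bigl(\cl(p)\cl(\n)\theta_+\bigr).
\end{equation*}
Thus $\cl(p)\theta_+-2(p,\m)\,\cl(\n)\theta_+$ lies in the image of $\cl(\m)$.

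To finish, I would check that $\cl(\m)(\text{anything})$ pairs to zero against $\theta_-$: for any spinor $\psi$,
\begin{equation*}
  \TT(\cl(\m)\psi,\theta_-)=\m_\mu\TT(\gamma^\mu\psi,\theta_-)=\m_\mu\TT(\psi,\gamma^\mu\theta_-)=\TT(\psi,\cl(\m)\theta_-)=0,
\end{equation*}
using the symmetry of $\TT$ and $\cl(\m)\theta_-=0$, which holds because $\theta_-\in\ss_-$. Applying this with $\psi=\cl(p)\cl(\n)\theta_+$ and combining with the previous display yields $\TT(\cl(p)\theta_+,\theta_-)=2(p,\m)\,\TT(\cl(\n)\theta_+,\theta_-)=2(p,\m)\,\n_\nu\TT^\nu(\theta_+,\theta_-)$, which is the assertion. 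I do not expect a genuine obstacle: the computation is a few lines, and the only non-mechanical step is recognizing that inserting $\cl(\m)\cl(\n)=\Id_{\ss_+}$ lets the Clifford relation produce precisely the coefficient $2(p,\m)$, with a leftover term automatically annihilated by $\cl(\m)\theta_-=0$. (One could run the argument symmetrically instead, rewriting the left-hand side as $\TT(\theta_+,\cl(p)\theta_-)$, expanding $\theta_-=\cl(\m)\cl(\n)\theta_-$, and using $\cl(\m)\theta_+=0$ directly.)
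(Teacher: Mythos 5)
Your argument is correct, and every ingredient you invoke --- the relation $\cl(\m)\cl(\n)+\cl(\n)\cl(\m)=1$, the contracted Clifford relation $\cl(p)\cl(\m)+\cl(\m)\cl(p)=2(p,\m)$, the symmetry $\TT(\gamma^\mu\alpha,\beta)=\TT(\alpha,\gamma^\mu\beta)$, and the defining conditions $\cl(\m)\theta_\pm=0$ --- is explicitly available in the paper with the conventions you use. The organization, however, differs from the paper's. The paper introduces the transverse vector $q=p-2(p,\n)\m-2(p,\m)\n$, inserts $1=\cl(\m)\cl(\n)+\cl(\n)\cl(\m)$ to the right of $\cl(p)$, and derives the operator identity $\cl(p)=-\cl(\m)\cl(q)\cl(\n)-\cl(\n)\cl(q)\cl(\m)+2(p,\m)\cl(\n)+2(p,\n)\cl(\m)$, after which three of the four resulting terms in the pairing are killed by $\cl(\m)\theta_+=0$ or $\cl(\m)\theta_-=0$. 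You instead insert the projector $\cl(\m)\cl(\n)$ on $\theta_+$ itself and anticommute $\cl(p)$ past the leading $\cl(\m)$ once, so that the whole discrepancy $\cl(p)\theta_+-2(p,\m)\cl(\n)\theta_+=-\cl(\m)\cl(p)\cl(\n)\theta_+$ lands in the image of $\cl(\m)$ and is annihilated in one stroke by the symmetry of $\TT$ together with $\cl(\m)\theta_-=0$. Your route is shorter and dispenses with the auxiliary vector $q$ entirely; what the paper's version buys is an explicit, $\m\leftrightarrow\n$-symmetric decomposition of $\cl(p)$ exhibiting both coefficients $2(p,\m)$ and $2(p,\n)$, which makes transparent how the formula specializes under the later substitutions $\m\mapsto\m(\tau)$, $\n\mapsto\n(\tau)$. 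Either proof is acceptable; I see no gap in yours.
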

\begin{proof}
  Let $q=p-2\,(p,\n)\m-2\,(p,\m)\n$. We have
  \begin{align*}
    \cl(p)
    &= \cl(p)\cl(\m)\cl(\n) + \cl(p)\cl(\n)\cl(\m) \\
    &= \bigl( - \cl(\m) \cl(q) \cl(\n) + 2\, (p,\m)
      \cl(\n)\cl(\m)\cl(\n) \bigr) \\
    &+ \bigl( - \cl(\n) \cl(q) \cl(\m) + 2\, (p,\n)
      \cl(\m)\cl(\n)\cl(\m) \bigr) \\
    &= - \cl(\m) \cl(q) \cl(\n) - \cl(\n) \cl(q) \cl(\m) +
      2\, (p,\m)\cl(\n) + 2\, (p,\n)\cl(\m) .
  \end{align*}
  It follows that
  \begin{align*}
    p_\mu \TT^\mu(\theta_+,\theta_-)
    &= - q_\mu \bigl( \TT^\mu(\cl(\n)\theta_+,\cl(\m)\theta_-)
      + \TT^\mu(\cl(\m)\theta_+,\cl(\n)\theta_-) \bigr) \\
    &+ 2\, (p,\m) \TT(\cl(\n)\theta_+,\theta_-)
      + 2\, (p,\n) \TT(\cl(\m)\theta_+,\theta_-) \\
    &= 2\, (p,\m) \TT(\cl(\n)\theta_+,\theta_-) .
      \qedhere
  \end{align*}
\end{proof}

Let $U(\m)$ be the open subset of $M$ where the inequality $(p,\m)>0$
holds. Let $L(\m)$ be the Lagrangian submanifold in $U(\m)$ defined by
the equations
\begin{align*}
  x^+_\mu &= 0 , & p^{+\mu} &= 0 , & e &= 1 , &
 c^+ &= 0 , & \cl(\m)\theta_n &= 0 , & \cl(\m)\theta^+_n &= 0 ,
\end{align*}
and let $\iota$ be the inclusion of $L(\m)$ in $U(\m)$.

The explicit formula for the Batalin--Vilkovisky extension of the
classical action of the superparticle and Lemma \ref{lightcone} give
the following proposition.
\begin{lemma}
  \label{fixed}
  The gauge-fixed action $\iota^*S$ for the superparticle equals
  \begin{equation*}
    \int \Bigl( p_\mu \p x^\mu - \half (p,p) + \p e^+ c + (p,\m)
    \, \n_\mu \Bigl( - \TT^\mu(\theta_0,\p\theta_0) + 2
    \sum_{n=0}^\infty
    \TT^\mu(\theta^+_n,\theta_{n+1}) \Bigr) \Bigr) \, dt .
  \end{equation*}
  In particular, there is no dependence on the function $\Phi(p)$ at
  the classical level.
\end{lemma}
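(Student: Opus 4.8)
The plan is to compute the two pieces of $\iota^*S=\iota^*S_0+\iota^*S_1$ separately, starting from the explicit description of the Batalin--Vilkovisky action of the superparticle. The first piece is immediate: substituting the defining equations of $L(\m)$ into \eqref{S}, the term $(x^+,p)c$ drops out because $x^+_\mu=0$ on $L(\m)$, and the kinetic term $-\half e(p,p)$ becomes $-\half(p,p)$ because $e=1$; the antifields $p^{+\mu}$ and $c^+$ do not occur in $S_0$ at all. Hence $\iota^*S_0=\int\bigl(p_\mu\p x^\mu-\half(p,p)+\p e^+\,c\bigr)\,dt$, which accounts for the first three terms of the asserted formula.

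For the second piece I would use the explicit formula for $S_1$ from \cite{superparticle}. Each of its monomials is built from $p_\mu$, the fields $\theta_n$, and the antifields $x^+_\mu,e^+,c^+,\theta^+_n$ together with their worldline derivatives. On $L(\m)$ one has $x^+_\mu=c^+=0$, so every monomial containing $x^+$ or $c^+$ restricts to zero; the monomials that survive involve only $p_\mu$, $e^+$, and the spinor fields $\theta_n,\theta^+_n$, which on $L(\m)$ are constrained by $\cl(\m)\theta_n=\cl(\m)\theta^+_n=0$ and so take values in the eight-dimensional subspaces $\ss_\pm\subset\Ss_\pm$. I would then read off from the explicit formula that the surviving monomials are exactly the dimensionally reduced Green--Schwarz term $-\half\int p_\mu\TT^\mu(\theta_0,\p\theta_0)\,dt$ and the antifield couplings $\sum_{n\ge0}\int p_\mu\TT^\mu(\theta^+_n,\theta_{n+1})\,dt$ of the ghost-for-ghost tower.

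The final step is to apply Lemma~\ref{lightcone}. Since $\m$ is a constant vector, $\cl(\m)$ commutes with $\p$, so $\cl(\m)\theta_n=0$ forces $\cl(\m)\p\theta_n=0$; thus in every surviving monomial both spinor arguments lie in the relevant $\ss_\pm$, and Lemma~\ref{lightcone} replaces $p_\mu\TT^\mu(\theta_0,\p\theta_0)$ by $2(p,\m)\,\n_\mu\TT^\mu(\theta_0,\p\theta_0)$ and $p_\mu\TT^\mu(\theta^+_n,\theta_{n+1})$ by $2(p,\m)\,\n_\mu\TT^\mu(\theta^+_n,\theta_{n+1})$. Factoring out $(p,\m)\,\n_\mu$ gives $\iota^*S_1=\int(p,\m)\,\n_\mu\bigl(-\TT^\mu(\theta_0,\p\theta_0)+2\sum_{n\ge0}\TT^\mu(\theta^+_n,\theta_{n+1})\bigr)\,dt$, and adding $\iota^*S_0$ gives the claimed formula. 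The function $\Phi(p)$ will enter only the auxiliary data of the flexible Lagrangian submanifold built below -- the homotopy one-form $\eta$ and the half-form in the functional integral -- and never the action $S=S_0+S_1$ itself, so the $\Phi$-independence at the classical level is immediate from the formula just established.

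The only real difficulty is bookkeeping: one must keep the explicit formula for $S_1$ at hand and check, carefully minding signs, total parities, and the chirality of each $\theta_n$ and $\theta^+_n$, that precisely these two families of monomials survive the restriction to $L(\m)$ with the coefficients shown. Granting that, the substitution from Lemma~\ref{lightcone} and the trivial computation of $\iota^*S_0$ complete the proof.
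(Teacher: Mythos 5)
Your proposal is correct and follows exactly the route the paper intends: the paper offers no written proof beyond the remark that the lemma follows from the explicit formula for the Batalin--Vilkovisky action of the superparticle together with Lemma~\ref{lightcone}, and your argument (restrict $S_0$ and $S_1$ to $L(\m)$ using $x^+_\mu=c^+=0$, $e=1$, note that $\cl(\m)$ commutes with $\p$ so both spinor arguments of each surviving bilinear lie in $\ss_\pm$, then apply Lemma~\ref{lightcone}) is precisely that computation with the details filled in. The deferral of the term-by-term bookkeeping to the explicit formula in the cited reference matches what the paper itself does, so there is nothing further to add.
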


In terms of the redefined spinor fields
\begin{align}
  \label{redefine}
  \Theta_n
  &= (p,\m)^{n+1/2} \theta_n , &
  \Theta_n^+
  &= (p,\m)^{-n-1/2} \theta_n^+ ,
\end{align}
this gauge-fixed theory has the same action as a free massless
relativistic superparticle, with classical action
\begin{equation*} 
  \int \Bigl( p_\mu \p x^\mu - \half (p,p) + \p e^+ c - \n_\mu
  \TT^\mu(\Theta_0,\p\Theta_0) + 2 \sum_{n=0}^\infty \n_\mu
  \TT^\mu(\Theta^+_n,\Theta_{n+1}) \Bigr) \, dt .
\end{equation*}

At least after regularization, the Berezinian of the canonical
transformation \eqref{redefine} is seen to be proportional to the
value of the Dirichlet L-function
\begin{equation*}
  L(s) = 1^{-s} - 3^{-s} + 5^{-s} - 7^{-s} + 9^{-s} - \dots
\end{equation*}
at $s=-1$. As Hurwitz showed \cite{Hurwitz},
\begin{equation*}
  L(-1) = 1 - 3 + 5 - 7 + 9 - \dots =
  \sum_{\substack{n\in\Z\\n\equiv1\bmod{4}}} n
\end{equation*}
vanishes. In terms of the Hurwitz zeta-function
\begin{equation*}
  \zeta(s,a) = a^{-s} + (a+1)^{-s} + (a+2)^{-s} + (a+3)^{-s} + \dots ,
\end{equation*}
we clearly have
\begin{equation*}
  L(s) = 4^{-s} \left( \zeta(s,1/4) - \zeta(s,3/4) \right) .
\end{equation*}
The values of the Hurwitz zeta-function at negative integers is
related to the Bernoulli polynomials $B_n(a)$, by the formula
\begin{equation*}
  \zeta(1-n,a) = - B_n(a)/n .
\end{equation*}
(This is a consequence of the equation
$\p\zeta(s,a)/\p s=-s\zeta(s+1,a)$ and Euler's formulas
$\zeta(1-n)=(-1)^nB_n/n$, $n>0$, and $\text{Res}_{s=1}\zeta(s)=1$.)
Since $B_n(1-a)=(-1)^nB_n(a)$, and in particular $B_2(a)=a^2-a+1/6$,
the vanishing of $L(-1)$ follows. This L-function was first presented
by Schl\"omilch in the guise of a problem for university students
\cite{Schlomilch}, and we reproduce here the original text.

\begin{figure}[h]
  \centering
  \fbox{\includegraphics[height=3in]{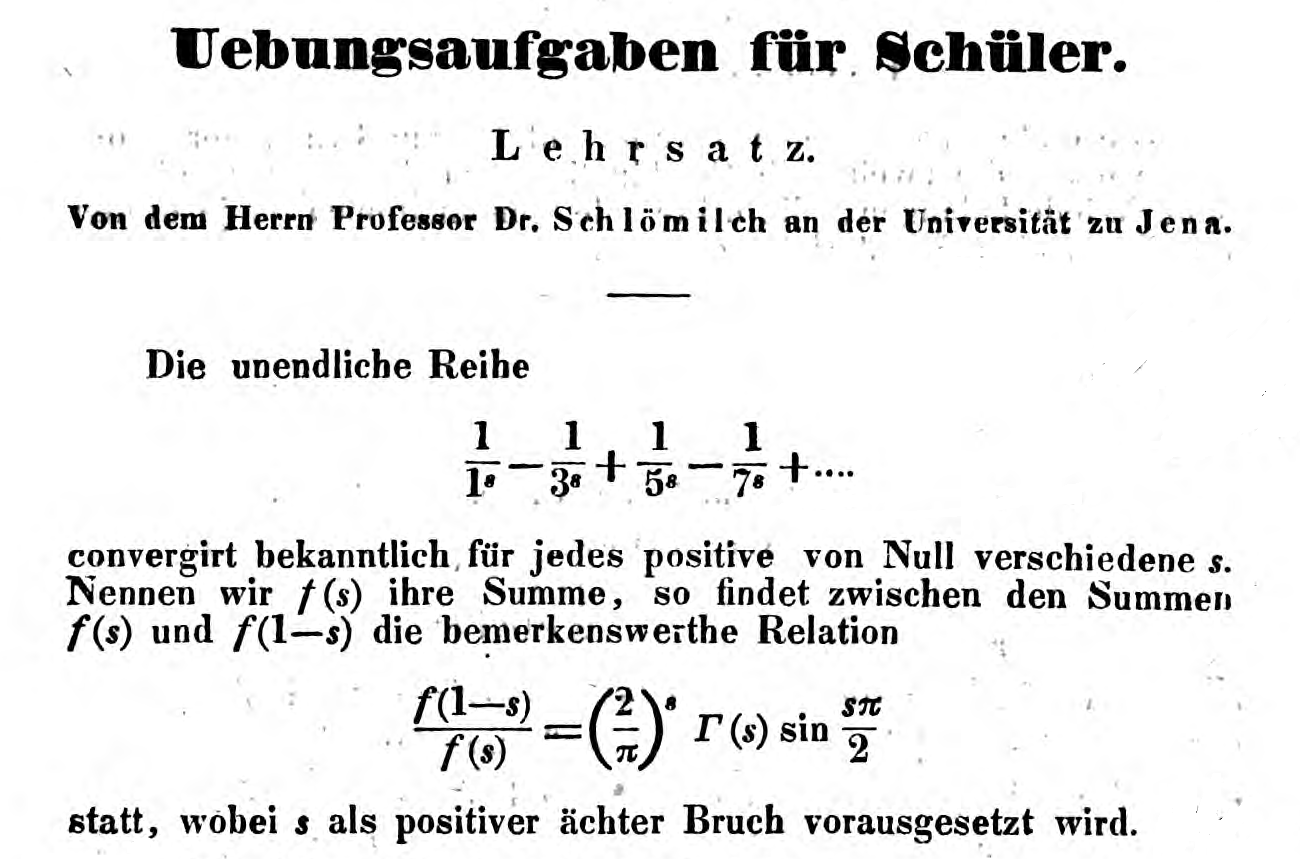}}
\end{figure}

Arguably the physics of the superparticle takes place in the vicinity
of the forward light-cone $\{(p,p)=0\}\cap\{p_0>0\}$. The open set
$U(\m)$ does not cover the whole of the forward light-cone: in fact,
it omits the ray where $p_0=p_9$ and the transverse momenta vanish.
This is a symptom of a Gribov ambiguity in the use of the light-cone
gauge, which appears to have received little attention in the
literature.\footnote{An exception is Siegel \cite{Siegel}. He resolves
  the problem in the absence of supersymmetry by a method akin to
  stochastic quantization.  He adjoins additional coordinates in the
  target spacetime: a pair of bosonic fields, with metric of signature
  $(1,1)$, and a pair of compensating fermionic fields. The effect is
  to replace the Lorentz group $\SO(1,9)$ by the orthosymplectic
  supergroup $\SOSp(2,9|2)$. Extending this method from the particle
  to the superparticle would seem to require the introduction of
  spinors for the superspace $\R^{1,1|2}$, or equivalently,
  differential forms on an auxilliary line. We will not pursue this
  approach further here.}

The open set
\begin{equation*}
  U = \{ p_1^2 + \dots + p_9^2 > \half p_0^2 \} \cap \{ p_0 > 0 \}
\end{equation*}
is a neighbourhood of the forward light-cone. Consider the light-like
vectors
\begin{equation*}
  \m_\pm = \half (E^0\pm E^9) ,
\end{equation*}
and the associated open sets $U(\m_\pm)=\{p_0>\pm p_9\}$. We consider
the subsets
\begin{equation*}
  U_\pm = \{ p_0 > \pm 2\,p_9 \} \cap U \subset U(\m_\pm) .
\end{equation*}
The open sets $U_+$ and $U_-$ cover $U$, and have intersection
\begin{equation*}
  U_{+-} = U_+ \cap U_- = \{ p_1^2 + \dots + p_9^2 > \half p_0^2 > 2\,p_9^2 \} .
\end{equation*}

We adopt the convention that indices in the range $\{1,\dots,8\}$,
that is, transverse to the light-cone, are denoted $p_a$, $p^{+b}$,
etc. The Einstein summation convention will also be applied for these
indices. Consider the Hamiltonian flow $\Phi_\tau$ associated to the
Hamiltonian
\begin{equation}
  \label{Phi}
  \psi = - \frac{\pi}{p_*}
  \sum_{n=0}^\infty p_a \TT^{a 9}(\theta^+_n,\theta_n) ,
\end{equation}
where
$p_*=\bigl(\eta^{ab}p_ap_b\bigr)^{1/2} =
\bigl(p_1^2+\dots+p_8^2\bigr)^{1/2}$. This flow leaves all of the
fields invariant except $p^{+a}$, $\theta_n$ and $\theta^+_n$. In
terms of the one-parameter group in $\Spin(1,9)$,
\begin{equation*}
  g(\tau) = \cos(\pi\tau/2) - \frac{\sin(\pi\tau/2)}{p_*} p_a
  \gamma^{a9} ,
\end{equation*}
these fields transform as follows: $\theta_n(\tau)= g(\tau)\theta_n$,
$\theta^+_n(\tau)=g(\tau)\theta^+_n$, and
\begin{equation*}
  p^{+a}(\tau) = p^{+a} + \sum_{n=0}^\infty \TT\biggl(
  g(\tau)^{-1}\frac{\p g(\tau)}{\p p_a} \, \theta^+_n,\theta_n \biggr) .
\end{equation*}
The formulas for $\theta_n(\tau)$ and $\theta^+_n(\tau)$ are clear,
and the formula for $p^{+a}$ on solving for the antibrackets
$\bigl(p_a,p^{+b}(\tau)\bigr)=\delta^b_a$,
$\bigl( \theta_n(\tau) , p^{+a}(\tau) \bigr)=0$ and
$\bigl( \theta^+_n(\tau) , p^{+a}(\tau) \bigr)=0$.

Consider the Lagrangian submanifolds
\begin{equation*}
  L_\pm = L(\m_\pm)\cap U_\pm ,
\end{equation*}
with inclusions $\iota_\pm:L_\pm\to U_\pm$. Let $L_{+-}=L_+\cap U_-$,
and let $L(\tau)$ be the image of $L_{+-}$ under the canonical
transformation $\Phi_\tau$. A calculation in the Clifford algebra
shows that
\begin{equation*}
  g(\tau)\cl(\m)g(\tau)^{-1} = \cl(\m(\tau)) ,
\end{equation*}
where $\m(\tau)$ is the light-like vector
\begin{equation*}
  \m(\tau) = \frac{1}{2} \biggl( E^0 + \cos(\pi\tau) E^9 -
  \frac{\sin(\pi\tau)}{p_*} p_a E^a \biggr) . 
\end{equation*}
On $L(\tau)$, we have $\cl(\m(\tau)) \theta_n(\tau)= 0$ and
$\cl(\m(\tau)) \theta^+_n(\tau)= 0$, and hence
$\theta_n=\cl(\m(\tau))\cl(\n(\tau))$, where $\n(\tau)$ is the
light-like vector
\begin{equation*}
  \n(\tau) = \frac{1}{2} \biggl( E^0 - \cos(\pi\tau) E^9 +
  \frac{\sin(\pi\tau)}{p_*} p_a E^a \biggr) . 
\end{equation*}
The Lagrangian $L(\tau)$ is cut out by the equations $x^+_\mu=0$,
$p^{+\mu}(\tau)=0$, $e=1$, $c^+=0$, $\cl(\m(\tau))\theta_n=0$ and
$\cl(\m(\tau))\theta^+_n=0$.

\begin{lemma}
  \label{p+}
  On $L(\tau)$, we have
  \begin{equation*}
    p^{+a}(\tau) = p^{+a} + \frac{\sin(\pi\tau)}{2p_*}  \biggl(
    \frac{p_b \eta^{ab}}{(p_*)^2} p_c \TT^{c9}(\theta^+_n,\theta_n) -
    \TT^{a9}(\theta^+_n,\theta_n) \biggr) .
  \end{equation*}
\end{lemma}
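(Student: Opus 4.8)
The plan is to take the formula displayed just above the lemma,
\begin{equation*}
  p^{+a}(\tau) = p^{+a} + \sum_{n=0}^\infty \TT\Bigl( g(\tau)^{-1}\tfrac{\partial g(\tau)}{\partial p_a}\,\theta^+_n,\theta_n \Bigr),
\end{equation*}
to compute the operator $g(\tau)^{-1}\partial g(\tau)/\partial p_a$ explicitly, and then to restrict the result to $L(\tau)$ using its defining equations $\cl(\m(\tau))\theta_n = \cl(\m(\tau))\theta^+_n = 0$. The first thing to record is the identity $(p_b\gamma^{b9})^2 = -(p_*)^2$, which follows from $\gamma^{b9}\gamma^{c9} = -\gamma^b\gamma^c$ for transverse indices (since $\gamma^b$ and $\gamma^9$ anticommute and $(\gamma^9)^2=1$) together with $\eta^{bc}p_bp_c=(p_*)^2$. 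This gives $g(\tau)^{-1}=\cos(\pi\tau/2)+\tfrac{\sin(\pi\tau/2)}{p_*}\,p_b\gamma^{b9}$ and turns the computation of $g^{-1}\partial_a g$ into a short expansion: differentiating $g(\tau)=\cos(\pi\tau/2)-\tfrac{\sin(\pi\tau/2)}{p_*}p_b\gamma^{b9}$ in $p_a$ (with $\partial p_*/\partial p_a=\eta^{ab}p_b/p_*$) and multiplying by $g(\tau)^{-1}$, and noting that the two scalar (identity-operator) terms cancel — as they must, $g^{-1}\partial_a g$ being $\so(1,9)$-valued — one obtains
\begin{equation*}
  g(\tau)^{-1}\tfrac{\partial g(\tau)}{\partial p_a} = \cos(\pi\tau/2)\sin(\pi\tau/2)\Bigl(\tfrac{\eta^{ab}p_bp_c}{(p_*)^3}-\tfrac{\delta^a_c}{p_*}\Bigr)\gamma^{c9} + \tfrac{\sin^2(\pi\tau/2)}{(p_*)^2}\,p_b\gamma^{ba}.
\end{equation*}

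Pairing this against $\theta^+_n$ and $\theta_n$, summing over $n$, writing $\TT(\gamma^{c9}-,-)=\TT^{c9}(-,-)$, and using $2\cos(\pi\tau/2)\sin(\pi\tau/2)=\sin\pi\tau$, the $\gamma^{c9}$-part reproduces precisely the asserted expression $\tfrac{\sin\pi\tau}{2p_*}\bigl(\tfrac{p_b\eta^{ab}}{(p_*)^2}p_c\TT^{c9}(\theta^+_n,\theta_n)-\TT^{a9}(\theta^+_n,\theta_n)\bigr)$. So the whole content of the lemma is that the purely transverse contribution $\sum_n p_b\TT^{ba}(\theta^+_n,\theta_n)$ vanishes on $L(\tau)$, and this is the step I expect to be the real obstacle. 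Here I would use three inputs. (i) $\TT(\theta^+_n,\theta_n)=0$ on $L(\tau)$: indeed $\cl(\m(\tau))$ is self-adjoint for $\TT$ (from $\TT(\gamma^\mu-,-)=\TT(-,\gamma^\mu-)$), so $\ker\bigl(\cl(\m(\tau))\big|_{\Ss_\mp}\bigr)$ is exactly the $\TT$-annihilator of $\ker\bigl(\cl(\m(\tau))\big|_{\Ss_\pm}\bigr)$, and $\theta^+_n,\theta_n$ lie in these two kernels. (ii) The relation $2\cl(\m(\tau))=\gamma^0+\cos\pi\tau\,\gamma^9-\tfrac{\sin\pi\tau}{p_*}p_b\gamma^b$, which exchanges the transverse Dirac operator $p_b\gamma^b$ acting on a constrained spinor for $\tfrac{p_*}{\sin\pi\tau}(\gamma^0+\cos\pi\tau\,\gamma^9)$ acting on the same spinor, along with the Clifford relations $\{\gamma^a,\cl(\m(\tau))\}=-\tfrac{\sin\pi\tau}{p_*}\eta^{ab}p_b$. (iii) The decomposition of the transverse momentum vector $p_bE^b$ into its component in the $8$-plane $V_\perp(\tau)=\{\m(\tau),\n(\tau)\}^{\perp}$ and its component in $\mathrm{span}\{\m(\tau),\n(\tau)\}$.

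Concretely I would write $p_b\gamma^{ba}=p^a-\gamma^a\,p_b\gamma^b$, push the constraints of $\m(\tau)$ through the Clifford algebra on both entries of $\TT$, and use (i) to discard the scalar pieces generated by the anticommutators; the $\mathrm{span}\{\m(\tau),\n(\tau)\}$-part of $p_bE^b$ contributes terms carrying a factor $\cl(\m(\tau))$ acting on $\theta_n$ or $\theta^+_n$ (hence zero), while the $V_\perp(\tau)$-part contributes a term that is an antisymmetrization contracted against the symmetric tensor $p_bp_c$ (hence zero). The hard part is organizing this cancellation so that no residual term survives; once it is done, the $\gamma^{c9}$-part computed above is exactly the claimed formula.
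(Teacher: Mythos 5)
Your computation of $g(\tau)^{-1}\p g(\tau)/\p p_a$ is correct, and so is the reduction it effects: the $\gamma^{c9}$-part reproduces the right-hand side of the lemma exactly (it agrees with what the paper obtains by a different manipulation, namely inserting $\cl(\m(\tau))\cl(\n(\tau))$ next to $\theta_n(\tau)$ and using $\gamma^{a9}\cl(\m)=\cl(\n)\gamma^{a9}$ to extract the factor $\cos(\pi\tau/2)$), so the lemma is indeed equivalent to the vanishing of $\sum_n p_b\TT^{ba}(\theta^+_n,\theta_n)$. The gap is in the constraint you propose to impose when proving that vanishing. The spinors $\theta_n,\theta^+_n$ appearing in the flow formula $p^{+a}(\tau)=p^{+a}+\sum_n\TT\bigl(g(\tau)^{-1}\p g(\tau)/\p p_a\,\theta^+_n,\theta_n\bigr)$, and hence in the right-hand side of the lemma, are the coordinates of the source point on $L_{+-}\subset L(\m_+)$, so they satisfy $\cl(\m_+)\theta_n=\cl(\m_+)\theta^+_n=0$. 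It is the transformed spinors $\theta_n(\tau)=g(\tau)\theta_n$ that satisfy the $\m(\tau)$-constraint; the operator adapted to that constraint is $\bigl(\p g(\tau)/\p p_a\bigr)g(\tau)^{-1}$, not $g(\tau)^{-1}\p g(\tau)/\p p_a$ (the two differ by conjugation by $g(\tau)$, which exactly compensates the rotation of the constraint, and this is the route the paper takes).

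If, as you propose, one imposes $\cl(\m(\tau))\theta_n=\cl(\m(\tau))\theta^+_n=0$ on the untransformed spinors, the residual term does not vanish: since
\begin{equation*}
  [\cl(\m(\tau)),p_b\gamma^{ba}] = -\frac{\sin(\pi\tau)}{p_*}
  \bigl( (p_*)^2\gamma^a - \eta^{ab}p_b\,p_c\gamma^c \bigr) ,
\end{equation*}
and the vector $(p_*)^2E^a-\eta^{ab}p_b\,p_cE^c$ is spacelike (so Clifford multiplication by it is invertible), the operator $p_b\gamma^{ba}$ does not preserve $\ker\cl(\m(\tau))$; thus $p_b\gamma^{ba}\theta^+_n$ acquires a component outside the $\TT$-annihilator of $\ker\bigl(\cl(\m(\tau))|_{\Ss_+}\bigr)$, and no reorganization along the lines of your items (ii)--(iii) can remove it. With the correct constraint the vanishing is immediate and needs none of that machinery: $\gamma^{ba}$ with both indices transverse commutes with $\cl(\m_+)=\half(\gamma^0+\gamma^9)$, hence preserves the subspaces $\ss_\pm=\ker\cl(\m_+)\subset\Ss_\pm$, and by your own observation (i) applied to $\m_+$ these two subspaces annihilate each other under $\TT$; therefore each $\TT^{ba}(\theta^+_n,\theta_n)$ vanishes term by term. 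With that one correction your argument closes, and is in fact more explicit than the paper's.
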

\begin{proof}
  Since $\theta_n(\tau)=\cl(\m(\tau))\cl(\n(\tau))\theta_n(\tau)$, we
  see that
  \begin{align*}
    p^{+a}(\tau)
    &= p^{+a} + \sum_{n=0}^\infty \TT\biggl(
      \frac{\p g(\tau)}{\p p_a} g(\tau)^{-1} \, \theta^+_n(\tau)
      , \theta_n(\tau) \biggr) \\
    &= p^{+a} + \sum_{n=0}^\infty \TT\biggl(
      \frac{\p g(\tau)}{\p p_a} g(\tau)^{-1} \, \theta^+_n(\tau) ,
      \cl(\m(\tau))\cl(\n(\tau))\theta_n(\tau) \biggr) \\
    &= p^{+a} + \sum_{n=0}^\infty \TT\biggl(
      g(\tau)\cl(\n)\cl(\m)g(\tau)^{-1}\frac{\p g(\tau)}{\p p_a}
      g(\tau)^{-1} \, \theta^+_n(\tau) , \theta_n(\tau) \biggr) .
  \end{align*}
  Observe that
  \begin{equation*}
    \cl(\n) \cl(\m) g(\tau)^{-1} = \cos(\pi\tau/2) \cl(\n) \cl(\m) +
    \frac{\sin(\pi\tau/2)}{p_*} p_a \gamma^{a9}
    \cl(\m) \cl(\n) .
  \end{equation*}
  Since
  \begin{equation*}
    \frac{\p g(\tau)}{\p p_a} = \frac{\sin(\pi\tau/2)}{p_*}  \biggl(
    \frac{p_b \eta^{ab}}{(p_*)^2} p_c \gamma^{c9} - \gamma^{a9}
    \biggr) ,
  \end{equation*}
  we see that
  \begin{align*}
    \cl(\m) \cl(\n) \frac{\p g(\tau)}{\p p_a} g(\tau)^{-1}
    \theta^+_n(\tau)
    &= \frac{\p g(\tau)}{\p p_a} \cl(\n) \cl(\m)
      g(\tau)^{-1} \theta^+_n(\tau) \\
    &= \frac{\p g(\tau)}{\p p_a} g(\tau)^{-1} \cl(\n(\tau))
      \cl(\m(\tau)) \theta^+_n(\tau) \\
    &= 0 \intertext{and}
      \cl(\n) \cl(\m) \frac{\p g(\tau)}{\p p_a} g(\tau)^{-1}
      \theta^+_n(\tau)
    &= \frac{\p g(\tau)}{\p p_a} \cl(\m) \cl(\n)
      g(\tau)^{-1} \theta^+_n(\tau) \\
    &= \frac{\p g(\tau)}{\p p_a} g(\tau)^{-1} \cl(\m(\tau))
      \cl(\n(\tau)) \theta^+_n(\tau) \\
    &= \frac{\p g(\tau)}{\p p_a} g(\tau)^{-1} \theta^+_n(\tau) .
  \end{align*}
  The lemma follows.
\end{proof}

Note that $\m(0)=\m_+$ and $\m(1)=\m_-$, while $\n(0)=\m_-$ and
$\n(1)=\m_+$. In conjunction with the lemma, this implies the
following corollary.
\begin{corollary}
  $L(1)=L_-\cap U_{+-}$
\end{corollary}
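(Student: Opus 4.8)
The plan is to specialize the one-parameter family $L(\tau)$ at $\tau = 1$ and match its defining equations against those of $L(\m_-)$. First I would record, as already noted above, that $\cos(\pi) = -1$ and $\sin(\pi) = 0$ force $\m(1) = \half(E^0 - E^9) = \m_-$ and $\n(1) = \half(E^0 + E^9) = \m_+$. Then the description of $L(\tau)$ as the zero locus of $x^+_\mu$, $p^{+\mu}(\tau)$, $e - 1$, $c^+$, $\cl(\m(\tau))\theta_n$ and $\cl(\m(\tau))\theta^+_n$ becomes, at $\tau = 1$, the zero locus of $x^+_\mu$, $p^{+\mu}(1)$, $e - 1$, $c^+$, $\cl(\m_-)\theta_n$ and $\cl(\m_-)\theta^+_n$.

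The one point that needs care is the $p^{+\mu}(1)$ equation. By Lemma~\ref{p+}, the transformed antifield $p^{+a}(\tau)$ differs from $p^{+a}$ by a term with overall factor $\sin(\pi\tau)$, which vanishes at $\tau = 1$; so $p^{+a}(1) = p^{+a}$ for each transverse index $a$. The remaining components $p^{+0}$ and $p^{+9}$ are fixed by the flow $\Phi_\tau$ altogether, since the Hamiltonian $\psi$ of \eqref{Phi} moves only $p^{+a}$, $\theta_n$ and $\theta^+_n$; hence $p^{+\mu}(1) = p^{+\mu}$ for all $\mu$. Therefore $L(1)$ is cut out by $x^+_\mu = 0$, $p^{+\mu} = 0$, $e = 1$, $c^+ = 0$, $\cl(\m_-)\theta_n = 0$ and $\cl(\m_-)\theta^+_n = 0$, which are exactly the equations defining $L(\m_-)$ inside $M$.

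Finally I would pin down the domain: since $\Phi_\tau$ fixes all the momenta $p_\mu$, it preserves every momentum-defined open subset of $M$, in particular $U_{+-}$; as $L_{+-} \subset U_{+-}$, it follows that $L(1) = \Phi_1(L_{+-}) \subset U_{+-}$, so $L(1) = L(\m_-) \cap U_{+-}$. Because $U_{+-} = U_+ \cap U_- \subset U_-$ and $L_- = L(\m_-) \cap U_-$, this equals $(L(\m_-) \cap U_-) \cap U_{+-} = L_- \cap U_{+-}$, which is the assertion. The only genuine obstacle is the bookkeeping in the second paragraph — verifying that the non-transverse antifield components are inert under $\Phi_\tau$, so that the vanishing of $\sin(\pi\tau)$ at $\tau = 1$ really yields $p^{+\mu}(1) = p^{+\mu}$ for every $\mu$; the rest is direct substitution into Lemma~\ref{p+}.
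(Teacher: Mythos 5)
Your proof is correct and takes essentially the same route as the paper: the paper's justification is exactly the observation that $\m(1)=\m_-$ (and $\n(1)=\m_+$) combined with Lemma~\ref{p+}, whose overall factor of $\sin(\pi\tau)$ kills the correction to $p^{+a}$ at $\tau=1$. You have merely written out the bookkeeping the paper leaves implicit (the non-transverse antifields being inert under $\Phi_\tau$, and the flow preserving the momentum-defined set $U_{+-}$), all of which checks out.
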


Let $\iota_{+-}: L_{+-}\times\Delta^1\to U_{+-}$ be the family of
Lagrangians equal to $L(\tau)$ at $\tau\in\Delta^1$. Together with the
Lagrangians $L_\pm$, we obtain a flexible Lagrangian for the cover
$\{U_+,U_-\}$.

By Lemma \ref{fixed}, we have
\begin{multline*}
  \iota_{+-}^*S = \\
  \int \Bigl( p_\mu \p x^\mu - \half (p,p) + \p e^+ c + p(\tau) \,
  \n_\mu(\tau) \Bigl( - \TT^\mu(\theta_0,\p\theta_0) + 2
  \sum_{n=0}^\infty \TT^\mu(\theta^+_n,\theta_{n+1}) \Bigr) \Bigr) \,
  dt ,
\end{multline*}
where $p(\tau)=(p,\m(\tau))$. Here, we have used that the action $S$
of the superparticle does not depend on $p^{+\mu}$.
\begin{lemma}
  The functions $p(\tau)$, $0\le\tau\le1$,and $p_*$ are positive on
  $U_{+-}$.
\end{lemma}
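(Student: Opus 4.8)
The plan is to reduce both assertions to the explicit inequalities cutting out $U_{+-}$, namely $p_1^2+\dots+p_9^2 > \half p_0^2 > 2\,p_9^2$ (with $p_0>0$ since $U_{+-}\subseteq U$). I would dispose of $p_*$ first, since its positivity is needed even for $\m(\tau)$, and hence $p(\tau)$, to be defined on $U_{+-}$: chaining the two inequalities gives $p_1^2+\dots+p_9^2 > 2\,p_9^2$, so $(p_*)^2 = p_1^2+\dots+p_8^2 > p_9^2 \ge 0$, and therefore $p_*>0$ everywhere on $U_{+-}$.

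Next I would compute $p(\tau)=(p,\m(\tau))$ in the Darboux coordinates. Substituting the formula for $\m(\tau)$ and using $\eta^{ab}p_ap_b=(p_*)^2$ to collapse the transverse sum, a direct computation gives
\begin{equation*}
  p(\tau) = \half\bigl( p_0 - p_9\cos(\pi\tau) + p_*\sin(\pi\tau) \bigr) ,
\end{equation*}
which at the endpoints recovers $p(0)=(p,\m_+)$ and $p(1)=(p,\m_-)$, consistent with $U(\m_\pm)=\{p_0>\pm p_9\}$. For $\tau\in[0,1]$ one has $\sin(\pi\tau)\ge0$, so $p_*\sin(\pi\tau)\ge0$ because $p_*>0$, while $-p_9\cos(\pi\tau)\ge-|p_9|$; hence $2\,p(\tau)\ge p_0-|p_9|$ throughout the interval. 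Finally $U_{+-}=U_+\cap U_-\subseteq\{p_0>2p_9\}\cap\{p_0>-2p_9\}=\{p_0>2|p_9|\}$, and $2|p_9|\ge|p_9|$, so $p_0-|p_9|>0$ on $U_{+-}$ and therefore $p(\tau)>0$ there for every $\tau\in[0,1]$.

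I do not expect a genuine obstacle; this is an elementary estimate. The only points needing a little care are to carry the coordinate expansion of $(p,\m(\tau))$ through with the correct signs — in particular the sign of the $p_*\sin(\pi\tau)$ term, which is exactly what keeps $p(\tau)$ positive for $\tau$ in the interior of $[0,1]$ and not merely at the endpoints $\tau=0,1$, where positivity already follows from $U_{+-}\subseteq U(\m_\pm)$ — and to note that the hypothesis $p_*>0$ is used twice: once so that $\m(\tau)$ is well-defined, and once to discard the nonnegative $\sin$ term in the estimate.
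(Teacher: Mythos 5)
Your proof is correct and follows essentially the same route as the paper: the same coordinate formula $p(\tau)=\half(p_0-p_9\cos(\pi\tau)+p_*\sin(\pi\tau))$ with the same estimate $2p(\tau)\ge p_0-|p_9|>0$, and positivity of $p_*$ extracted from the defining inequalities of $U_{+-}$ (you deduce $p_*^2>p_9^2\ge0$ where the paper deduces $p_*^2>\tfrac14 p_0^2$, a trivial variant). No gaps.
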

\begin{proof}
  On $U_{+-}$, we have
  \begin{equation*}
    p(\tau) = \half \bigl( p_0 - \cos(\pi\tau) p_9 + \sin(\pi\tau) p_*
    \bigr) \ge \half ( p_0 - |p_9| ) > 0 .
  \end{equation*}
  Likewise,
  \begin{equation*}
    \half p_0^2 < p_*^2+p_9^2 < p_*^2 + \tfrac{1}{4} p_0^2 ,
  \end{equation*}
  and hence $p_*^2>\frac{1}{4} p_0^2$, showing that $p_*$ is positive
  on $U_{+-}$.
\end{proof}

We now perform the change of variables \eqref{redefine}:
\begin{align*}
  \Theta_n
  &= p(\tau)^{n+1/2} \theta_n , &
  \Theta_n^+
  &= p(\tau)^{-n-1/2} \theta_n^+ .
\end{align*}
The resulting gauge-fixed action $\iota_{+-}^*S$ is independent of the
parameter $\tau$. The one-form $\eta_{+-}=\psi\,d\tau$ in the
contribution of $U_{+-}$ to the functional integral $\T(\sigma_\bull)$
equals
\begin{equation*}
  \eta_{+-} = - \frac{\pi}{p_*} \sum_{n=0}^\infty p_a
  \TT^{a9}(\Theta^+_n,\Theta_n) \, d\tau .
\end{equation*}

To complete the formula for $\T(\sigma_\bull)$, we need a partition of
unity for the cover $U_\pm$ of $U$. Choose a function
$\varphi\in C^\infty(\R)$ that vanishes for $t\le\tfrac14$ and such
that $\varphi(t)+\varphi(1-t)=1$. A suitable partition of unity is
\begin{equation*}
  \varphi_\pm(p) = \varphi\bigl( (p_9\mp p_0)/2p_9 \bigr) .
\end{equation*}

\section{Global symmetries}

The superparticle is Lorentz invariant and supersymmetric. On the
other hand, the flexible Lagrangian that we constructed in the last
section is not invariant under these symmetries. In this section, we
give an equivariant extension of Theorem~\ref{T}. There is now a
Hamiltonian action of a (finite-dimensional) Lie superalgebra $\g$ on
the Batalin--Vilkovisky supermanifold $M$. We do not assume any
compatibility between this action and either the cover $\CU$ or the
simplicial Lagrangian $L_\bull$. Instead, we express the covariance of
the linear form $\T$ by adapting the BRST formalism. In practice, this
means that we replace the complex numbers by the commutative
superalgebra $C^*(\g)$ of cochains of the Lie superalgebra $\g$.

The action of $\g$ on $M$ is determined by a moment map, that is, a
morphism of Lie superalgebras $\rho:\g\to\CO(M)[-1]$. In other words,
if $\xi_1,\xi_2\in\g$, we have
\begin{equation*}
  (\rho(\xi_1),\rho(\xi_2)) = \rho([\xi_1,\xi_2]) .
\end{equation*}

We now introduce the differential graded commutative superalgebra
$C^*(\g)$ of Lie superalgebra cochains on $\g$. This is the free
graded commutative superalgebra generated by the dual superspace
$\g^\vee[-1]$ to $\g$ placed at ghost number $1$. If $\{\xi_a\}$ is a
(homogeneous) basis of $\g$, then $C^*(\g)$ is generated by elements
$\{\epsilon^a\}$ of ghost number $1$, having the opposite total degree
to $\xi_a$: if $\xi_a$ is even (respectively odd), $\epsilon^a$ is an
exterior (resp.\ polynomial) generator.

The structure coefficients of $\g$ are defined as follows:
\begin{equation*}
  [\xi_a,\xi_b] = C_{ab}^c \, \xi_c .
\end{equation*}
The differential on $C^*(\g)$ is given by the formula
\begin{equation*}
  \delta_\g \epsilon^a = \half \sum_{b,c} (-1)^{(\pa(\xi_b)+1)\pa(\xi_c)}
  \, C_{bc}^a \, \epsilon^b\epsilon^c .
\end{equation*}
The element
\begin{equation*}
  \mu = \sum_a \rho(\xi_a) \epsilon^a \in \CO(M) \o C^*(\g)
\end{equation*}
satisfies the Maurer-Cartan equation:
\begin{equation*}
  \delta_\g\mu + \half \bigl( \mu , \mu \bigr) = 0 .
\end{equation*}
This implies the following identity for differential operators on
$\Om^{1/2}(M)\o C^*(\g)$:
\begin{equation*}
  e^{\mu/\hbar} \circ ( \delta_\g + \H_\mu + \hbar\Delta ) = 
  ( \delta_\g + \hbar\Delta ) \circ e^{\mu/\hbar} .
\end{equation*}
In particular, $\delta_\g + \H_\mu + \hbar\Delta$ is a differential on
$\Om^{1/2}(M)\o C^*(\g)$.

We have the following equivariant extension of Theorem~\ref{T}.
\begin{theorem}
  \label{Tg}
  Define a linear form $\T_\g$ on
  $\Tot\CO(M_\bull)\o\Om_\ell\o C^*(\g)$ with values in
  $\Om_\ell\o C^*(\g)$ by the formula
  \begin{equation*}
    \T_\g(\sigma_\bull) = \sum_{k=0}^\infty (-1)^k
    \sum_{\alpha_0\dots\alpha_k} \int_{\Delta^k}
    \int_{L_{\alpha_0\dots\alpha_k}}
    e^{-\eta_{\alpha_0\dots\alpha_k}/\hbar} \,
    \iota_{\alpha_0\dots\alpha_k}^* \bigl( \psi_{\alpha_0\dots\alpha_k}
    \bigl(
    e^{\mu/\hbar} \sigma_{\alpha_0\dots\alpha_k} \bigr) \bigr) .
  \end{equation*}
  Then $\T_\g$ is closed:
  $\T_\g\bigl( (\d+\delta_\g+\H_\mu+\delta+\hbar\Delta)\sigma_\bull
  \bigr) + (\d+\delta_\g)\T_\g\bigl( \sigma_\bull \bigr) = 0$.
\end{theorem}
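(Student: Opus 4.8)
The plan is to reduce the assertion to the parametrized form of Theorem~\ref{T}, by conjugating away the twisting term $\H_\mu$ with the operator $e^{\mu/\hbar}$, in the manner of the BRST formalism.

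The first step is to adjoin the cochain differential $\delta_\g$ to the two basic inputs of the construction. Since $\delta_\g$ acts only on the factor $C^*(\g)$, whereas each of the operators $m(\varphi_\alpha)$, $m(\d\varphi_\alpha)$, $\H_\alpha$ entering $\Phi_{\alpha_0\dots\alpha_k}$ in~\eqref{ell} acts on $\Om^{1/2}(M_k)\ho\Om_\ell$, we have $\delta_\g\varphi_\alpha=0$ and $[\delta_\g,\Phi_{\alpha_0\dots\alpha_k}]=0$. Consequently Lemma~\ref{eta} holds word for word with $\d+\delta+\hbar\Delta$ replaced by $\d+\delta_\g+\delta+\hbar\Delta$, and the proof of Theorem~\ref{T} (which rests on Theorem~\ref{MS} and Stokes's theorem) transcribes to give the operator identity
\[
  \T\circ(\d+\delta_\g+\delta+\hbar\Delta)+(\d+\delta_\g)\circ\T=0
\]
on $\Tot\Om^{1/2}(M_\bull)\o\Om_\ell\o C^*(\g)$; in effect $\delta_\g$ here plays the role already played by $\d$. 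Secondly, the Maurer--Cartan element $\mu$ does not involve the simplices $\Delta^k$ or $\Delta^\ell$, so $\d\mu=\delta\mu=0$ and $e^{\mu/\hbar}$ commutes with $\d$ and with $\delta$; combined with the identity $e^{\mu/\hbar}\circ(\delta_\g+\H_\mu+\hbar\Delta)=(\delta_\g+\hbar\Delta)\circ e^{\mu/\hbar}$ recorded above, this yields
\[
  e^{\mu/\hbar}\circ(\d+\delta_\g+\H_\mu+\delta+\hbar\Delta)=(\d+\delta_\g+\delta+\hbar\Delta)\circ e^{\mu/\hbar}.
\]
Finally, because the moment map $\rho:\g\to\CO(M)[-1]$ is defined over all of $M$, the element $\mu$ satisfies the Thom--Whitney compatibility conditions, so multiplication by $e^{\mu/\hbar}$ preserves $\Tot\Om^{1/2}(M_\bull)\o\Om_\ell\o C^*(\g)$.

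The theorem is then immediate. By construction $\T_\g=\T\circ e^{\mu/\hbar}$, where $\T$ is the trace associated to the operators~\eqref{ell} (extended $C^*(\g)$-linearly) and $e^{\mu/\hbar}$ acts by multiplication; composing the two displayed identities,
\[
  \begin{aligned}
    \T_\g\circ(\d+\delta_\g+\H_\mu+\delta+\hbar\Delta)
    &=\T\circ e^{\mu/\hbar}\circ(\d+\delta_\g+\H_\mu+\delta+\hbar\Delta)\\
    &=\T\circ(\d+\delta_\g+\delta+\hbar\Delta)\circ e^{\mu/\hbar}\\
    &=-(\d+\delta_\g)\circ\T\circ e^{\mu/\hbar}=-(\d+\delta_\g)\circ\T_\g,
  \end{aligned}
\]
which is the claim. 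The one step demanding genuine care is the sign bookkeeping in the transcribed proofs of Lemma~\ref{eta} and Theorem~\ref{T}: one must check that the odd operator $\delta_\g$ emerges with sign $+(\d+\delta_\g)\circ\T$, and not the opposite, when commuted past the degree-$k$ form on $\Delta^k$ being integrated (the origin of the factor $(-1)^k$ in the definition of $\T$) and past the fibrewise half-form integral over $L_{\alpha_0\dots\alpha_k}$, and that the parities of the operators $\Phi_{\alpha_0\dots\alpha_k}$ used in rearranging the commutators match those in Lemma~\ref{eta}. The rest is a mechanical copy of the earlier arguments.
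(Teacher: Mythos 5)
Your proposal is correct and follows exactly the route the paper intends: the paper offers no explicit proof of Theorem~\ref{Tg}, but the conjugation identity $e^{\mu/\hbar}\circ(\delta_\g+\H_\mu+\hbar\Delta)=(\delta_\g+\hbar\Delta)\circ e^{\mu/\hbar}$ displayed just before the statement is precisely the reduction you carry out, writing $\T_\g=\T\circ e^{\mu/\hbar}$ and invoking the parametrized Theorem~\ref{T} with $\delta_\g$ adjoined (which commutes with the operators $\Phi_{\alpha_0\dots\alpha_k}$ of \eqref{ell}, so Lemma~\ref{eta} and its consequences transcribe verbatim). Your reading of $\psi_{\alpha_0\dots\alpha_k}$ as the operator $\Phi_{\alpha_0\dots\alpha_k}$ is the right one.
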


We apply this theorem to the superparticle. The Lie superalgebra $\g$
is the sum of three subspaces: translations, parametrized by a
covariant vector in $\R^{1,9}$, supersymmetries, parametrized by a
Majorana--Weyl spinor in $\Ss_-$, and Lorentz transformations,
parametrized by $\so(1,9)$, or equivalently, the second exterior power
$\Lambda^2\R^{1,9}$. The momentum for translation symmetry equals
\begin{equation*}
  \int x^+_\mu \, dt .
\end{equation*}
The restriction of $x^+_\mu$ to the flexible Lagrangian of the
previous section vanishes. Thus, translations may be ignored in the
calculation of $\mu$.

The momentum for supersymmetry equals
\begin{equation*}
  \int \bigl( \theta^+_0 - \half x^+_\mu \gamma^\mu \theta_0 \bigr) \,
  dt .
\end{equation*}
On restriction to the flexible Lagrangian by any of the maps
$\iota_\pm$ or $\iota_{+-}$, the second term vanishes.  Denoting the
corresponding BRST ghosts, of ghost number $1$ and even total parity,
by $\epsilon\in\Ss_+$, we obtain a contribution of
$\int \TT(\theta^+_0,\epsilon) \, dt$ to $\mu$ in all three cases.

The momentum for Lorentz symmetries is
\begin{equation*}
  \int \Bigl( \eta^{\lambda[\mu} x^{\nu]} x^+_\lambda
  - \eta^{\lambda[\mu} p^{+\nu]} p_\lambda
  - \sum_{n=0}^\infty\TT^{\mu\nu}(\theta^+_n,\theta_n) \Bigr) \, dt .
\end{equation*}
Its contribution to $\iota_+^*\mu$ and $\iota_-^*\mu$ equals
\begin{equation*}
  - \sum_{n=0}^\infty \int \bigl(
  \TT^{a0}(\theta^+_n,\theta_n) \, \epsilon_{a0}
  + \TT^{a9}(\theta^+_n,\theta_n) \, \epsilon_{a9} \bigr) \, dt ,
\end{equation*}
since $x^+_\mu$ and $p^{+\mu}$ vanish on $L_+$ and $L_-$, and
$\gamma^{ab}$ and $\gamma^{09}$ commute with $\cl(\m_+)$ and
$\cl(\m_-)$. Its contribution to $\iota^*_{+-}\mu$ may be derived from
the formula of Lemma~\ref{p+} for $p^{+\mu}(\tau)$.

\begin{bibdiv}
\bibliographystyle{JHEP}
\begin{biblist}

\bib{BGI}{article}{
   author={Becchi, Carlo},
   author={Giusto, Stefano},
   author={Imbimbo, Camillo},
   title={The functional measure of gauge theories in the presence of Gribov
   horizons},
   conference={
      title={Path integrals from peV to TeV},
      address={Florence},
      date={1998},
   },
   book={
      publisher={World Sci. Publ., River Edge, NJ},
   },
   date={1999},
   pages={36--43},
   review={\MR{1726570}},
}

\bib{BF}{article}{
   author={Behrend, Kai},
   author={Fantechi, Barbara},
   title={Gerstenhaber and Batalin--Vilkovisky structures on Lagrangian
   intersections},
   conference={
      title={Algebra, arithmetic, and geometry: in honor of Yu. I. Manin.
      Vol. I},
   },
   book={
      series={Progr. Math.},
      volume={269},
      publisher={Birkh\"{a}user Boston, Inc., Boston, MA},
   },
   date={2009},
   pages={1--47},
   review={\MR{2641169}},
   doi={10.1007/978-0-8176-4745-2\_1},
}

\bib{BottTu}{book}{
   author={Bott, Raoul},
   author={Tu, Loring W.},
   title={Differential forms in algebraic topology},
   series={Graduate Texts in Mathematics},
   volume={82},
   publisher={Springer-Verlag, New York-Berlin},
   date={1982},
   pages={xiv+331},
   isbn={0-387-90613-4},
   review={\MR{658304}},
}

\bib{Costello}{book}{
   author={Costello, Kevin},
   title={Renormalization and effective field theory},
   series={Mathematical Surveys and Monographs},
   volume={170},
   publisher={American Mathematical Society, Providence, RI},
   date={2011},
   pages={viii+251},
   isbn={978-0-8218-5288-0},
   review={\MR{2778558}},
   doi={10.1090/surv/170},
}

\bib{covariant}{article}{
   author={Getzler, Ezra},
   title={Covariance in the Batalin-Vilkovisky formalism and the
   Maurer-Cartan equation for curved Lie algebras},
   journal={Lett. Math. Phys.},
   volume={109},
   date={2019},
   number={1},
   pages={187--224},
   issn={0377-9017},
   review={\MR{3897596}},
   doi={10.1007/s11005-018-1106-8},
}

\bib{superparticle}{article}{
  author={Getzler, Ezra},
  author={Pohorence, Sean Weinz},
  journal={Adv. Math. Theor. Phys.},
  date={2019},
  volume={23},
  number={6},
  title={Covariance of the classical Brink--Schwarz superparticle},
}  

\bib{Gribov}{article}{
   author={Gribov, V. N.},
   title={Quantization of non-Abelian gauge theories},
   journal={Nuclear Phys. B},
   volume={139},
   date={1978},
   number={1-2},
   pages={1--19},
   issn={0550-3213},
   review={\MR{0496136}},
   doi={10.1016/0550-3213(78)90175-X},
}

\bib{Hurwitz}{article}{
  author={Hurwitz, Adolf},
  title={Einige Eigenschaften der Dirichlet'schen Funktionen
  $F(s)=\sum \left( \frac{D}{n} \right) \cdot \frac{1}{n^s}$, die
  bei der Bestimmung der Klassen-anzahlen bin\"arer quadratischer
  Formen auftreten},
  journal={Zeitschrift f\"ur Mathematik und Physik},
  volume={27},
  date={1882},
  pages={86--101}
}

\bib{Khudaverdian}{article}{
   author={Khudaverdian, Hovhannes M.},
   title={Semidensities on odd symplectic supermanifolds},
   journal={Comm. Math. Phys.},
   volume={247},
   date={2004},
   number={2},
   pages={353--390},
   issn={0010-3616},
   review={\MR{2063265}},
   doi={10.1007/s00220-004-1083-x},
}

\bib{KV}{article}{
   author={Khudaverdian, Hovhannes M.},
   author={Voronov, Theodore Th.},
   title={Differential forms and odd symplectic geometry},
   conference={
      title={Geometry, topology, and mathematical physics},
   },
   book={
      series={Amer. Math. Soc. Transl. Ser. 2},
      volume={224},
      publisher={Amer. Math. Soc., Providence, RI},
   },
   date={2008},
   pages={159--171},
   review={\MR{2462360}},
   doi={10.1090/trans2/224/08},
}

\bib{Leites}{article}{
  author={Leites, D. A.},
  title={Lie superalgebras},
  journal={Journal of Soviet Mathematics},
  year={1985},
  volume={30},
  number={6},
  pages={2481--2512},
  issn={1573-8795},
  doi={10.1007/BF02249121},
  url={https://doi.org/10.1007/BF02249121}
}

\bib{Manin}{book}{
   author={Manin, Yuri I.},
   title={Gauge field theory and complex geometry},
   series={Grundlehren der Mathematischen Wissenschaften
   },
   volume={289},
   note={Translated from the Russian by N. Koblitz and J. R. King},
   publisher={Springer-Verlag, Berlin},
   date={1988},
   pages={x+297},
   isbn={3-540-18275-6},
   review={\MR{954833}},
}

\bib{Schlomilch}{article}{
  author={Schl\"{o}milch, Oskar},
  journal={Grunert's Archiv der Math.\ u.\ Physik, Ser.\ I},
  number={12},
  date={1849},
  pages={415}
}

\bib{MS}{article}{
   author={Mikhailov, Andrei},
   author={Schwarz, Albert},
   title={Families of gauge conditions in BV formalism},
   journal={J. High Energy Phys.},
   date={2017},
   number={7},
   pages={063, front matter+24},
   issn={1126-6708},
   review={\MR{3686735}},
   doi={10.1007/JHEP07(2017)063},
}

\bib{Schwarz}{article}{
   author={Schwarz, Albert},
   title={Geometry of Batalin--Vilkovisky quantization},
   journal={Comm. Math. Phys.},
   volume={155},
   date={1993},
   number={2},
   pages={249--260},
   issn={0010-3616},
   review={\MR{1230027}},
}

\bib{Severa}{article}{
   author={\v{S}evera, Pavol},
   title={On the origin of the BV operator on odd symplectic supermanifolds},
   journal={Lett. Math. Phys.},
   volume={78},
   date={2006},
   number={1},
   pages={55--59},
   issn={0377-9017},
   review={\MR{2271128}},
   doi={10.1007/s11005-006-0097-z},
}

\bib{Siegel}{article}{
   author={Siegel, W.},
   title={Covariantly second-quantized string},
   journal={Phys. Lett. B},
   volume={142},
   date={1984},
   number={4},
   pages={276--280},
   issn={0370-2693},
   review={\MR{756203}},
   doi={10.1016/0370-2693(84)91197-3},
}

\bib{Singer}{article}{
   author={Singer, I. M.},
   title={Some remarks on the Gribov ambiguity},
   journal={Comm. Math. Phys.},
   volume={60},
   date={1978},
   number={1},
   pages={7--12},
   issn={0010-3616},
   review={\MR{500248}},
}

\bib{Sullivan}{article}{
   author={Sullivan, Dennis},
   title={Infinitesimal computations in topology},
   journal={Inst. Hautes \'Etudes Sci. Publ. Math.},
   number={47},
   date={1977},
   pages={269--331},
   issn={0073-8301},
   review={\MR{0646078}},
}

\bib{Weinstein}{article}{
   author={Weinstein, Alan},
   title={Symplectic manifolds and their Lagrangian submanifolds},
   journal={Advances in Math.},
   volume={6},
   date={1971},
   pages={329--346},
   issn={0001-8708},
   review={\MR{0286137}},
   doi={10.1016/0001-8708(71)90020-X},
}

\end{biblist}
\end{bibdiv}

\end{document}